\DeclareMathAlphabet{\mathbbold}{U}{bbold}{m}{n}
\DeclarePairedDelimiter{\ser}{\llbracket}{\rrbracket}
\numberwithin{equation}{section}
\newcommand{\ddefas}{\vcentcolon\vcentcolon=}
\newcommand{\free}{\mathrm{Free}}
\newcommand{\trop}{\mathrm{Trop}}
\newcommand{\arct}{\mathrm{Arct}}
\newcommand{\tuple}[1]{\ensuremath{\langle{#1}\rangle}}  
\newcommand{\Sr}{\mathcal{R}}
\newcommand{\sre}{r}
\newcommand{\Nz}{\mathbb{N}}
\newcommand{\Np}{\mathbb{N}_+}
\newcommand{\splus}{+}
\newcommand{\smal}{\cdot}
\newcommand{\szero}{e_{\oplus}}
\newcommand{\sone}{e_{\otimes}}
\newcommand{\bzero}{0}
\newcommand{\bone}{1}
\newcommand{\fplus}{\oplus}
\newcommand{\fmal}{\otimes}
\newcommand{\bigfplusop}{{\textstyle \bigoplus}} 
\newcommand{\bigfplus}{\bigoplus} 
\newcommand{\bigfmalop}{{\textstyle \bigotimes}} 
\newcommand{\bigfmal}{\bigotimes} 
\newcommand{\NPrm}{\ensuremath{\mathrm{NP}}}
\newcommand{\NPfinite}[1]{\NPrm\ensuremath{_{\mathrm{fin}}}(#1)}
\newcommand{\NPoldinf}[1]{\NPrm(#1)}
\newcommand{\NPnewinf}[1]{\NPrm\ensuremath{_{\infty}}(#1)}
\newcommand{\rel}[1]{\mathrm{Rel}_{#1}}
\newcommand{\ar}[1]{\mathrm{ar}_#1}
\newcommand{\inter}[1]{\mathcal{I}_{#1}}
\newcommand{\str}[1]{\mathrm{Str}(#1)}
\newcommand{\fo}[1]{\mathrm{FO}(#1)}
\newcommand{\wfo}[1]{\mathrm{wFO}(#1)}
\newcommand{\mso}[1]{\mathrm{SO}(#1)}
\newcommand{\wmso}[1]{\mathrm{wSO}(#1)}
\newcommand{\limrec}{\ensuremath{\mathit{LimDis}(\mathcal{R})}}
\newcommand{\karp}{\ensuremath{\mathit{Karp}}}
\newcommand{\rec}[1]{ \text{dis}_{#1}}
\newcommand{\dom}{\mathrm{dom}}
\newcommand{\B}{\mathbb{B}}
\newcommand{\R}{\mathbb{R}}
\newcommand{\mcp}{\mathcal{P}}
\newcommand{\mcv}{\mathcal{V}}
\newcommand{\mcx}{\mathcal{X}}
\newcommand{\mfa}{\mathfrak{A}}
\newcommand{\vecg}[1]{\bar{#1}} 
\newcommand{\nop}[1]{}
\newcommand{\myeqn}[3]{

\vspace*{#1}

\[#2
\]

\vspace*{#3} 

\noindent}
\newtheorem{define}[theorem]{Definition} 
\title{Fagin's Theorem for Semiring Turing Machines} 
\author{Guillermo Badia}{University of Queensland, Australia}{g.badia@uq.edu.au}{https://orcid.org/0000-0002-5597-6794}{}
\author{Manfred Droste}{University of Leipzig, Germany}{droste@informatik.uni-leipzig.de}{https://orcid.org/0000-0001-9128-8844}{}
\author{Thomas Eiter}{Technical University of Vienna, Austria}{eiter@kr.tuwien.ac.at}{https://orcid.org/0000-0001-6003-6345}{}
\author{Rafael Kiesel}{Technical University of Vienna, Austria}{rafael.kiesel@web.de}{https://orcid.org/0000-0002-8866-3452}{}
\author{Carles Noguera}{University of Siena, Italy}{carles.noguera@unisi.it}{https://orcid.org/0000-0003-4910-599X}{European Union's Marie Sklodowska--Curie grant no.\ 101007627 (MOSAIC project). GNSAGA, gruppo nazionale di ricerca INdAM.}
\author{Erik Paul}{University of Leipzig, Germany}{epaul@informatik.uni-leipzig.de}{https://orcid.org/0000-0002-0814-598X}{}
\authorrunning{G. Badia, M. Droste, T. Eiter, R. Kiesel,  C. Noguera, and E. Paul}
\keywords{Descriptive complexity, Semiring Turing machines, Weighted logics, Semirings} 
\newcommand{\SumProd}[1]{\textsc{SumProd}(#1)}
\begin{document}

\maketitle

\begin{abstract} 
In recent years, quantitative complexity over semirings has been intensively investigated. In this context, Eiter and Kiesel 
(Semiring Reasoning Frameworks in AI and Their Computational Complexity, \emph{J. Artif. Intell. Res.}, 2023) introduced non-deterministic Turing Machines with semiring-weighted transitions (SRTMs) to capture the complexity of a manifold of semiring frameworks.
Beyond computational complexity, they posed the question of how we can relate the computational power of SRTMs to logical expressiveness. While this question was partially addressed for a more limited machine model by Badia et al.\ (Logical characterizations of weighted complexity classes, \emph{MFCS}, 2024), the full question remained open.

To answer it, we present an improved version of 
Eiter and Kiesel's SRTM model of computation. First and foremost, this enables us to prove a Fagin Theorem for the SRTM model, i.e., we show that the quantitative complexity class 
\NPnewinf{$\mathcal{R}$}, which comprises non-deterministic polynomial time computability in the improved SRTM model
over a commutative semiring $\mathcal{R}$, is captured by a version of weighted existential second-order logic that allows for predicates interpreted as semiring-annotated relations over $\mathcal{R}$. 
Furthermore, we argue that the new SRTM model is 
preferable over the original one and show that it reclaims some important results from~\cite{DBLP:journals/jair/EiterK23} that were flawed with respect to 
the latter.
\end{abstract}

\newpage
\setcounter{page}{1}
\section{Introduction}

{\bf Background and Motivation.} 
Fagin proved in a celebrated article~\cite{fagingeneralized}
that the class of $\mathsf{NP}$ languages coincides with the class of languages definable in existential second-order logic. This result essentially gave birth to the area of descriptive complexity, where one is interested in logical characterizations of complexity classes, a well-established topic today~\cite{DBLP:books/daglib/0095988}.

In this work, we go beyond decision problems and consider the descriptive complexity of quantitative models of computation. Here, an established approach is using the algebraic structure of \emph{semirings} to capture the quantitative aspects by annotating transitions in a computation with weights from a given semiring. This was fruitfully showcased most prominently for weighted automata~\cite{10.5555/1667106} and down the line for different Turing machine-style models, such as weighted Turing machines (originally called {\em algebraic Turing machines}~\cite{DBLP:journals/cc/DammHM02}) to naturally capture the complexity of various counting problems~\cite{DBLP:journals/iandc/Kostolanyi24,DBLP:conf/mfcs/BadiaDNP24}.

Specifically, we consider Semiring Turing Machines (SRTMs) introduce in~\cite{DBLP:journals/jair/EiterK23}. They differ from the more common weighted Turing machines in that semiring values appear explicitly on the tape and can be used as weights for transitions. This is necessary to capture, for example, the complexity of \emph{provenance} problems, such as the provenance queries of~\cite{GKV}, which require more than the finite number of fixed weights that can be provided in the weighted transition relations of weighted Turing machines.

A central problem that was left open in~\cite{DBLP:journals/jair/EiterK23} is a Fagin-style theorem in the SRTM context.\footnote{Barlag et al.\ \cite{DBLP:journals/corr/abs-2502-12939} asked for a Fagin-style result in the context of non-deterministic $\text{BSS}_K$ machines, which was answered in~\cite{barlag2025logicalapproachesnondeterministicpolynomial}. Similarly to our framework, their machines allow semiring inputs, but importantly they consider complexity classes of decision problems but not of function computations. Thus the second-order existential logic used in \cite{barlag2025logicalapproachesnondeterministicpolynomial} has  only Boolean second-order quantifiers (taking values from $\{0,1\}$), while our logic needs semiring-valued second-order quantifiers to obtain a Fagin-style theorem.}. Our main goal is to answer this question. However, we find that the SRTM model in \cite{DBLP:journals/jair/EiterK23} is not satisfactory for characterizing the quantitative problems it is concerned with. We thus amend the SRTM model and argue that the new definition satisfies properties desirable for such a model that were missed. As a byproduct, the amendment also fixes an edge case in the original model that falsified two important claims in~\cite{DBLP:journals/jair/EiterK23}.
%
To establish a Fagin-style theorem for the new SRTM model,
we introduce a weighted existential second-order logic by modifying the approach in~\cite{DBLP:journals/tcs/DrosteG07} in that semantically, we allow atomic formulas themselves to be assigned arbitrary semiring weights (rather than only $0$ or $1$),
similar as in semiring semantics for first-order logic~\cite{G-T} or, more restricted to lattices, in many-valued logic~\cite{Cintula-Hajek-Noguera:Handbook, doi:10.3233/FUN-2008-843-402}.

The logic we introduce in this paper can also be seen as a logical language 
for \emph{databases over semirings}, i.e., databases in which the tuples of the relations are annotated with elements from a fixed semiring. Relational databases are typically interpreted 
over the Boolean semiring $\mathbb{B}=(\{0,1\}, \vee, \wedge, 0,1)$, while \emph{bag}\/ semantics correspond to databases over the semiring $\mathbb{N}=(N, +, \cdot, 0,1)$ of the natural numbers. Interest in semirings beyond 
$\mathbb{B}$ and $\mathbb{N}$ was sparked by the influential work~\cite{GKV}, which introduced polynomial semirings as a framework for representing and analyzing the \emph{provenance} of database queries. Initially, 
the latter was applied to the positive fragment of relational algebra (specifically, unions of conjunctive queries) 
and later extended to encompass provenance in first-order logic~\cite{G-T} and least fixed-point logic~\cite{DBLP:conf/csl/DannertGNT21}. Further research expanded 
semiring semantics to a variety of database-related problems, 
including query containment~\cite{Green:2011,DBLP:journals/tods/KostylevRS14}, 
Datalog queries~\cite{DBLP:journals/jacm/KhamisNPSW24,DBLP:journals/pacmmod/ZhaoDKRT24}, database repairs~\cite{10.24963/kr.2025/32}, Codd's Theorem~\cite{10.1145/3767713}, and the relationship between local and global consistency in semiring-based relational structures~\cite{AtseriasK2023,DBLP:journals/pacmmod/AtseriasK24}. Another research line showed connections between semiring-based database models and quantum information theory~\cite{DBLP:conf/birthday/Abramsky13}.

\medskip 
\noindent
{\bf Our contribution.} The contributions in 
this paper are twofold, viz.\ conceptual and technical. On the conceptual side, we introduce a novel
(amended) {\em Semiring Turing Machine} model (SRTM, Defn~\ref{srtm}). We show that it is an improvement over the model of~\cite{DBLP:journals/jair/EiterK23} 
as it satisfies desirable properties that do not hold for the original model. As a byproduct, the new model allows us to properly establish the connections with classical counting problems claimed in~\cite{DBLP:journals/jair/EiterK23}. On the technical side, we give a logical characterization of the complexity class \NPnewinf{$\mathcal{R}$}, which generalizes non-deterministic polynomial time computability to our model, by a Fagin-style Theorem (Theorem~\ref{fagin}). To this end, we need a new weighted existential second-order logic (Defn~\ref{wesol}) designed 
for structures that might have $\mathcal{R}$-relations, that is, relations where the tuples are weighted with elements from a fixed semiring $\mathcal{R}$.

SRTMs have certain advantages over other computation models in the literature, as they allow users to provide semiring weights in the input and to perform computations with them. This is in contrast to weighted automata and weighted (resp.\ algebraic) Turing machines where one can only calculate with the weights of transitions. 
SRTMs relieve one from complex encoding of semiring values  (where the output grows exponentially) to represent such computations  
and allow to do them directly. A concrete example is computation over the rationals using the machine model of~\cite{DBLP:journals/iandc/Kostolanyi24}. 
As the machines of~\cite{DBLP:journals/iandc/Kostolanyi24} can access only finitely many weights and the rational numbers are, as a semiring, not finitely generated, there is \emph{no such} machine that can generate any rational number. 
SRTMs instead can use
semiring weights in the input for transition and, thus, do not have this limitation.
As a result, we are 
empowered to model many interesting and practically relevant problem frameworks, such as SumProduct problems over semirings~\cite{friesen2016sumproduct,bacchus2009solving} and probabilistic reasoning~\cite{de2007problog,kimmig2011algebraic}, while limiting the power of the machine to prevent explosive recursive computations.

\medskip 
\noindent
{\bf Related Work.} By a \emph{quantitative complexity class}, 
we mean a semiring generalization of a classical complexity class containing decision problems, where the objects of concern are \emph{power series} (mappings from words over some alphabet into a semiring). 
This notion subsumes classical counting classes 
and function classes like FP, FPSPACE, etc. Arenas et al.\ \cite{arenas19descriptive} introduced quantified second-order logic over the natural numbers to capture counting complexity classes such as \#P,  or SpanP, and function classes such as FP or FPSPACE by suitable fragments. However, a generic semiring setting was of no concern and this was only addressed in~\cite{DBLP:conf/mfcs/BadiaDNP24}, where logical characterizations (in the style of Fagin's Theorem) of quantitative versions of classical complexity classes were  studied. Importantly, Badia et al.'s computation model in \cite{DBLP:conf/mfcs/BadiaDNP24} is less general than ours as weighted Turing machines do not allow for semiring values in the input, and their results do not apply to our setting. Indeed, 
they acknowledged the problem we address as a challenging  open problem. 
Even more recently, Barlag et al.~\cite{DBLP:journals/corr/abs-2502-12939} discussed a further machine model, $K$-Turing deterministic machines (see \cref{appendix:comparision} for a comparison) and a Fagin-style theorem for the non-deterministic counterpart~\cite{barlag2025logicalapproachesnondeterministicpolynomial} with a  logic different from ours. 
Furthermore, there are many other quantitative models of computation such as weighted automata, Blum-Shub-Smale (BSS) machines, real counting with weights that can be approximated etc. SRTMs and the ensuing complexity classes are distinguished from them via the comparison in~\cite{DBLP:journals/jair/EiterK23}.


\section{Technical Preliminaries}
\label{sec:SRTM}
The basic building block to model the quantitative side of computation in our setting is given by semirings.

\begin{define}[Semiring] A \emph{semiring} is a  structure $\mathcal{R}=(R, \oplus, \otimes,0,1)$, where  addition $\oplus$ and multiplication $\otimes$ are binary operations on $R$, and  $0$, $1$ are elements of $R$ such that
\begin{itemize} 
 \item $(R, \oplus, 0)$  is a commutative monoid,  $(R, \otimes, 1)$ is a monoid, and $0\neq 1$;
 \item (distributivity): for all elements $a,b,c$ in $R$, we have that 
$a\otimes (b\oplus c) = a\otimes b \oplus a \otimes c$ and $ (b\oplus c) \otimes a = b\otimes a \oplus c \otimes a$;
\item (annihilation): for every element $a$ in $R$, we have that $a \otimes 0 = 0 \otimes a = 0$.
\end{itemize}
$\mathcal{R}$ is \emph{commutative}  if the monoid $(R,\otimes,1)$ is commutative.
We also use $e_\oplus, e_\otimes$ as alternative names for the elements $0, 1$, respectively.
\end{define}

Semirings allow us to abstract away the details of specific calculations and operations, while being on the one hand restricted enough by their axioms and on the other hand general enough to capture a vast range of quantitative problems. 

Some examples of semirings are the following:
 the \emph{Boolean semiring} $\B = (\{\bzero, \bone\}, \min, \max, \bzero, \bone)$,
 any bounded distributive lattice $(L, \lor, \land, 0, 1)$,
 the semiring of natural numbers $(\Nz, +, \cdot, 0, 1)$,
 the \emph{max-plus} or \emph{arctic semiring} $\arct = (\R_+ \cup \{-\infty\}, \max, +, -\infty, 0)$, where $\R_+$ denotes the set of non-negative real numbers,
 the \emph{min-plus} or \emph{tropical semiring} $\trop = (\R_+ \cup \{+\infty\}, \min, +, +\infty, 0)$.

\section{Semiring Turing Machines}
As we do not use Semiring Turing Machines (SRTM) of~\cite{DBLP:journals/jair/EiterK23} but give our own improved version, we begin by stating our expectations/desiderata towards SRTMs. Later, we will show that the our definition and that of~\cite{DBLP:journals/jair/EiterK23} differ in terms of satisfaction of the desiderata. 

First and foremost, SRTMs must be able to solve and capture the complexity of quantitative problems. Specifically, as~\cite{DBLP:journals/jair/EiterK23}, we are interested in problems like the stereotypical sum of products problem \SumProd{$\mathcal{R}$} over the semiring $\mathcal{R}$, which is to compute
\vspace{-0.2cm}
\begin{align}
\bigfplus_{X_1, \dots, X_m \in \mathcal{D}} \bigfmal_{i = 1}^{n} f_i(\vec{Y_i}), \label{eq:sumprod}
\end{align}
where $\mathcal{D}$ is a finite domain, $f_i: \mathcal{D}^{k_i} \rightarrow R, i = 1, \dots, n$ are functions with the set $R$ of semiring values as co-domain.

As such, SRTMs must be able to:
\begin{enumerate}
    \item compute semiring operations,
    \item dependent on logical symbols from the input, and
    \item using semiring values from the input.
\end{enumerate}
Here, 1. and 2. are clear. For 3., consider that $f_r : \{d\} \rightarrow R, d \mapsto r$, $\mathcal{D} = \{d\}$ defines a \SumProd{$\mathcal{R}$}-instance for each $r \in R$. Thus, there must a single SRTM, capable of returning all values in $R$, depending on the input. Since there are non-finitely generated semirings, i.e., semirings such that no finite set of values can represent all other values through combination via sum and product, this means giving the machine access to a fixed finite set of values is insufficient. At least, it must be able to work with the values from the input. 
Limiting ourselves to finitely generated semirings is not an option, as they do not include those used in practical applications such as for parameter learning for neuro-symbolic reasoning~\cite{NEURIPS2018_dc5d637e} or for probabilistic reasoning~\cite{de2007problog}.

Additionally, SRTMs must \emph{not} be able to
\begin{enumerate}
    \item distinguish the semiring values from the input, or
    \item reuse results of partial computations more than once.
\end{enumerate}
Here, 1. can be clearly observed in (\ref{eq:sumprod}): which values $f_i$ returns only depends on the variable assignments $X_i$ and not on the return value itself. 2. has more pragmatic concerns. It would allow the computation of recursive sequences such as $a_0 = 2, a_{i + 1} = a_i \cdot a_i = 2^{2^i}$, which goes beyond the exponential sums and polynomial products in (\ref{eq:sumprod}).  

With this in mind, we define our new machine model.
\begin{define}[SRTM]\label{srtm}
Given a  semiring $\mathcal{R}$, a \emph{Semiring Turing Machine} over $\mathcal{R}$ is $M=(R', Q, \Sigma_{I}, \Sigma, \iota, \mathcal{X}, \sqcup, \delta)$, where
\begin{itemize}
    \item $R' \subseteq R$ is a finite set of semiring values, (intuitively, this is a set of fixed values ``known'' to $M$),
    \item $Q$ is a finite set of states,
    \item $\Sigma_{I}$ is a finite set of symbols (the input alphabet),
    \item $\Sigma$ is a finite set of symbols (the tape alphabet) such that $\Sigma_{I} \subseteq \Sigma$,
    \item $\iota \in Q$ is the initial state,
    \item  $\sqcup \in \Sigma \setminus \Sigma_{I}$ is the blank symbol,
    \item $\mathcal{X} \in \Sigma \setminus \Sigma_{I}$ is the semiring placeholder,
    \item $\delta \subseteq \left(Q \times \Sigma \right)\times \left(Q \times \Sigma\right) \times \{-1,1\} \times (R' \cup \{\mathcal{X}\})$ is a weighted transition relation. 
\end{itemize}
An SRTM $M$ is \emph{terminating}, if there is an $\ell\colon (\Sigma_{I} \cup R)^* \longrightarrow \mathbb{N}$ such that for every $s \in (\Sigma_{I} \cup R)^*$ every computation path (defined as usual) of $M$ on $s$ has at most length $\ell(s)$.
\end{define}
Termination is necessary for us to ensure that the following notions are well defined, as it keeps our sums and products of semiring values finite.

A tuple $((q, \sigma), (q', \sigma'), d, r) \in \delta$ intuitively means that there is a transition from current state $q$ to new state $q'$, when reading symbol $\sigma$. Through the transition, we write $\sigma'$ and change the index of the head by $d$. The weight of the transition is $r$ if $r \in R'$, or the semiring value in the tape cell under the head if $r = \mathcal{X}$. Note for this, that tape cells always contain both a read-write symbol $\sigma$ and a read-only semiring value. We now formalize above intuition.

In the following, given $x \in (\Sigma \times R)^{\mathbb{N}}$ such that $x = (\sigma_i, r_i)_{i \in \mathbb{N}}$, we use $x^{\Sigma}_n = \sigma_n$ and $x^{R}_n = r_n$. 

\begin{define}[SRTM value, Transitions]
The \emph{value} $\mathrm{val}(c)$ of a terminating SRTM $M$ on a configuration $c = (q, x, n)$, where $q \in Q$ is a state, $x \in (\Sigma \times R)^{*}$ is the string on the tape, and $n \in \mathbb{N}$ is the head position, is recursively defined by
\[
\mathrm{val}(c) = \bigoplus_{((q_1, \sigma_1), (q_2, \sigma_2), d, r) \in \delta, q_1 = q, \sigma_1 = x^{\Sigma}_n} \mathrm{wt}(r, c) \otimes \mathrm{val}(q_2, x[x^{\Sigma}_n/\sigma_2], n + d),
\]
where $x[x^{\Sigma}_n/\sigma_2]$ denotes the tape string, where the alphabet value at index $n$ was replaced by $\sigma_2$ and $\mathrm{wt}(r, c)$ denotes the weight of the transition, which is $r$, if $r \in R'$, and $x^{R}_n$, if $r = \mathcal{X}$; the empty sum has value $e_{\otimes}$. 
\end{define}
Since problems typically have a mixture of alphabet letters and semiring values in the input string, we need to transform said input string into the form of our tape strings. For this, we write the alphabet letters (resp.\ semiring values) into the alphabet (semiring) part of the tape cells with the same index. Formally:
\begin{define}[Initial Configuration]\label{def:initial}
Given a string $s \in (\Sigma_{I} \cup R)^{*}$, the initial tape string $x(s) \in (\Sigma \times R)^{\mathbb{N}}$ is defined by 
\begin{itemize}
\item  $x(s)^{\Sigma}_i = s_i$, if $i < |s|$ and $s_i \in \Sigma_{I}$,
\item  $x(s)^{\Sigma}_i = \mathcal{X}$, if $i < |s|$ and $s_i \in R$, 
\item $x(s)^{\Sigma}_i = \sqcup$, if $i \geq |s|$,
\item  $x(s)^{R}_i = s_i$, if $i < |s|$ and $s_i \in R$,
\item 
    $x(s)^{R}_i = e_{\oplus}$, if ($i < |s|$ and $s_i \in \Sigma_{I}$) or $i \geq |s|$.
    \end{itemize}
\end{define}

\begin{define}[SRTM function/output, \NPnewinf{$\mathcal{R}$}]
\label{def:NPnewinf}
Let  $M$  be a terminating SRTM. The output function
$\lVert M\rVert: (\Sigma_I \cup R)^* \longrightarrow R$
is defined by  $\lVert M\rVert (s) = \mathrm{val}(\iota, x(s), 0) $
for each  $s \in (\Sigma_I \cup R)^*$. Then,  \NPnewinf{$\mathcal{R}$} is defined as the complexity class
of those functions $f\colon(\Sigma_I \cup R)^* \longrightarrow R$ such that $f = \lVert M\rVert $ for some SRTM $M$ for which $\ell(s)$  is polynomially bounded in $|s|$.

\end{define}

\begin{example}[Conditional product]
\label{ex:conditional-product}
Consider this function over an arbitrary semiring $\mathcal{R}$:
\vspace{-0.2cm}
\begin{align*}
    f \ \colon& \ (\Sigma \cup R)^* \longrightarrow R, 
    x \mapsto \left\{ \begin{array}{cr}
        s_1 \fmal \dots \fmal s_n & \text{ if } x  = \sigma_1 \dotsb \sigma_n s_1 \dotsb s_n,\text{ where } \sigma_i \in \Sigma, s_i \in R, \\
        e_{\oplus} & \text{ otherwise}
    \end{array}\right.
\end{align*}
that is, if $x$ starts with any number of alphabet symbols followed by the same number of semiring values, then $f(x)$ is their product; otherwise $f(x)$ yields zero, i.e., $e_\oplus$.

An SRTM that represents $f$ is given by $M = (R', Q, \Sigma_{I}, \Sigma, \iota, \mathcal{X}, \sqcup, \delta)$, where
\vspace{-0.2cm}
\begin{align*}
    R' &= \{e_{\oplus}, e_{\otimes}\}\\
    Q &= \{\iota, \mathrm{right}, \mathrm{left}, \mathrm{turn\_left}, \mathrm{turn\_right}, \mathrm{fin}\}\\
    \Sigma &= \Sigma_{I} \cup \{\sqcup, \mathcal{X}\}\\
    \delta &= \{\ ((\iota, \mathcal{X}), (\mathrm{fin}, \mathcal{X}), 1, e_{\oplus}), & \\
    & \phantom{= \;\;\{} ((\iota, \sigma), (\mathrm{right}, \sqcup), 1, e_{\otimes}), \ \  ((\iota, \sqcup), (\mathrm{fin}, \sqcup), 1, e_{\otimes}), & \\
    & \phantom{= \;\;\{} ((\mathrm{right}, \sigma), (\mathrm{right}, \sigma), 1, e_{\otimes}), \ \  ((\mathrm{right}, \mathcal{X}), (\mathrm{right}, \mathcal{X}), 1, e_{\otimes}), & \\
    & \phantom{= \;\;\{}  ((\mathrm{right}, \sqcup), (\mathrm{turn\_left}, \sqcup), -1, e_{\otimes}), & \\
    & \phantom{= \;\;\{} ((\mathrm{turn\_left}, \mathcal{X}), (\mathrm{left}, \sqcup), -1, \mathcal{X}), \ \  ((\mathrm{turn\_left}, \sigma), (\mathrm{fin}, \sigma), -1, e_{\oplus}), & \\
    & \phantom{= \;\;\{} ((\mathrm{turn\_left}, \sqcup), (\mathrm{fin}, \sqcup), -1, e_{\oplus}), & \\
    & \phantom{= \;\;\{} ((left, \mathcal{X}), (\mathrm{left}, \mathcal{X}), -1, e_{\otimes}), \ \  ((\mathrm{left}, \sigma), (\mathrm{left}, \sigma), -1, e_{\otimes}), & \\
    & \phantom{= \;\;\{} ((\mathrm{left}, \sqcup), (\mathrm{turn\_right}, \sqcup), 1, e_{\otimes}), & \\
    & \phantom{= \;\;\{} ((\mathrm{turn\_right}, \mathcal{X}), (\mathrm{fin}, \mathcal{X}), 1, e_{\oplus}), \ \   ((\mathrm{turn\_right}, \sigma), (\mathrm{right}, \sqcup), 1, e_{\otimes}), & \\
    & \phantom{= \;\;\{}  ((\mathrm{turn\_right}, \sqcup), (\mathrm{fin}, \sqcup), 1, e_{\otimes}) & \}.
\end{align*}
The idea is that we remove alphabet values (from the left) and semiring values (from the right) one by one in such a manner that the first removed value is an alphabet value and the last removed value is a semiring value. If this fails, we transition with weight zero, thus failing the computation. To achieve that $f(x)$ is the product of the semiring values, if the computation succeeds, we transition with the weight under the head, whenever we remove a semiring value. We work out the computation for $\Sigma_{I} = \{a\}$, $\mathcal{R} = \mathbb{N}$ and $x = a1$.
This gives us as initial configuration:
\myeqn{-15pt}{(\iota, \begin{array}{cc|cccc}
    \dots & \sqcup & a & \mathcal{X} & \sqcup & \dots \\
    \dots & 0 & 0 & 1 & 0 & \dots
\end{array}, 0)}
{-2pt}
Here, the table denotes the tape, as the alphabet part in the first row and the semiring part in the second row, and the vertical bar denotes the ``start'' of the tape, i.e., the zero-th tape cell is directly to the right; we remind that $e_\oplus = 0$ and $e_\otimes = 1$.

In the first step, we have to use the transition $((\iota, \sigma), (\mathrm{right}, \sqcup), 1, 1)$, with $\sigma = a$, resulting in aggregated weight $1$ (as $oplus =1$) and configuration
\myeqn{-15pt}{
(\mathrm{right}, \begin{array}{cc|cccc}
    \dots & \sqcup & \sqcup & \mathcal{X} & \sqcup & \dots \\
    \dots & 0 & 0 & 1 & 0 & \dots
\end{array}, 1)}
{-5pt}
Next, we have to use the transition $((\mathrm{right}, \mathcal{X}), (\mathrm{right}, \mathcal{X}), 1, e_{\otimes})$, resulting in accumulated weight $1$ and configuration
\myeqn{-15pt}{
(\mathrm{right}, \begin{array}{cc|cccc}
    \dots & \sqcup & \sqcup & \mathcal{X} & \sqcup & \dots \\
    \dots & 0 & 0 & 1 & 0 & \dots
\end{array}, 2)}{-5pt}
Next, we have to use the transition $((\mathrm{right}, \sqcup), (\mathrm{turn\_left}, \sqcup), -1, e_{\otimes})$, resulting in accumulated weight $1$ and configuration
\myeqn{-15pt}{
(\mathrm{turn\_left}, \begin{array}{cc|cccc}
    \dots & \sqcup & \sqcup & \mathcal{X} & \sqcup & \dots \\
    \dots & 0 & 0 & 1 & 0 & \dots
\end{array}, 1)}{-5pt}
Next, we have to use the transition $((\mathrm{turn\_left}, \mathcal{X}), (\mathrm{left}, \sqcup), -1, \mathcal{X})$, resulting in accumulated weight $1$ and configuration
\myeqn{-15pt}{
(\mathrm{left}, \begin{array}{cc|cccc}
    \dots & \sqcup & \sqcup & \sqcup & \sqcup & \dots \\
    \dots & 0 & 0 & 1 & 0 & \dots
\end{array}, 0)}{-5pt}
Next, we have to use the transition $((\mathrm{left}, \sqcup), (\mathrm{turn\_right}, \sqcup), 1, e_{\otimes})$, resulting in accumulated weight $1$ and configuration
\myeqn{-15pt}{
(\mathrm{turn\_right}, \begin{array}{cc|cccc}
    \dots & \sqcup & \sqcup & \sqcup & \sqcup & \dots \\
    \dots & 0 & 0 & 1 & 0 & \dots
\end{array}, 1)}{-5pt}
Next, we have to use the transition $((\mathrm{turn\_right}, \sqcup), (\mathrm{fin}, \sqcup), 1, e_{\otimes})$, resulting in accumulated weight $1$ and configuration
\myeqn{-15pt}{
(\mathrm{fin}, \begin{array}{cc|cccc}
    \dots & \sqcup & \sqcup & \sqcup & \sqcup & \dots \\
    \dots & 0 & 0 & 1 & 0 & \dots
\end{array}, 2)}{-5pt}
Since $\mathrm{fin}$ does not have any transitions, we halt with overall weight $1$, as expected.
\end{example}

As SRTMs differ from classical Turing machines in their ability to treat semiring values as first class citizens on the tape and in the input string, a direct comparison 
of them is not possible. Instead, for such a comparison, we always need to consider semirings together with an injective encoding function $f\colon R \longrightarrow \Sigma_{I}^*$ that maps semiring values to tape strings. 

Regarding other machine models that use semiring weights, 
if there are no semiring values on the tape, then we would have computations as for 
weighted Turing machines in~\cite{DBLP:journals/iandc/Kostolanyi24}.
\begin{lemma}
For every commutative semiring $\mathcal{R}$ and $f\colon \Sigma_{I}^* \longrightarrow R$ there exists an SRTM $M$ such that $\lVert M \rVert  = f$ iff there exists a weighted Turing machine $M'$ such that $\lVert M' \rVert = f$.
\end{lemma}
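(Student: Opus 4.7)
The plan is to prove both directions by a direct simulation argument, exploiting the fact that the hypothesis restricts attention to inputs $s \in \Sigma_I^*$, so the semiring layer of the tape is trivially populated according to Definition~\ref{def:initial}.

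First, for the direction ($\Rightarrow$), suppose $M = (R', Q, \Sigma_I, \Sigma, \iota, \mathcal{X}, \sqcup, \delta)$ is an SRTM computing $f$. I would observe that whenever the input $s$ lies in $\Sigma_I^*$, the initial configuration has $x(s)^R_i = e_\oplus$ for every $i \in \mathbb{N}$, and since semiring values on tape cells are invariant under transitions (the clause $x^R_n = x'^R_n$ in the definition of $\tau(c)$), we have $x^R_n = e_\oplus$ throughout any computation. Consequently, every transition $\tau$ whose last entry is $\mathcal{X}$ contributes weight $e_\oplus$ to the sum-product evaluation of $val$, and by the annihilation axiom such transitions can be removed from $\delta$ without changing $\|M\|(s)$ for any $s \in \Sigma_I^*$. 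The resulting machine uses only weights from the finite set $R' \subseteq R$ on its transitions and never depends on the semiring layer of the tape, which is exactly the data of a weighted Turing machine in the sense of~\cite{DBLP:journals/iandc/Kostolanyi24}. Setting $M'$ to be this machine (with the semiring layer projected out) yields $\|M'\|(s) = \|M\|(s) = f(s)$ for every $s \in \Sigma_I^*$.

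For the direction ($\Leftarrow$), suppose $M'$ is a weighted Turing machine with $\|M'\| = f$. Let $R' \subseteq R$ be the finite set of transition weights occurring in $M'$. I would define an SRTM $M$ over the same state set and alphabet, with semiring placeholder $\mathcal{X}$ formally included (as required by Definition~\ref{srtm}) but never appearing as the last component of any transition, and with the transition relation $\delta$ obtained from that of $M'$ by attaching the original weight (an element of $R'$) as the fourth component. Termination of $M$ on inputs $s \in \Sigma_I^*$ follows from the termination of $M'$. Since for such inputs the tape semiring layer is identically $e_\oplus$ and no transition of $M$ consults it, the recursive definition of $val$ for $M$ collapses to exactly the weight-sum-over-paths evaluation used by weighted Turing machines, giving $\|M\|(s) = \|M'\|(s) = f(s)$.

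The main obstacle, such as it is, is bookkeeping rather than conceptual: one has to verify carefully that the configuration update rule of Definition~\ref{srtm} (in particular the invariance of the semiring layer and the head-position clamp $n' = |n+d|$) matches the standard semantics for weighted Turing machines in~\cite{DBLP:journals/iandc/Kostolanyi24}, and that the empty-sum convention $e_\otimes$ aligns in both models. Once these conventions are matched, both constructions preserve the sum-of-weights-over-computations semantics term by term, and the two directions combine to give the desired equivalence.
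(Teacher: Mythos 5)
Your overall strategy is the same as the paper's: the authors dispose of this lemma in one line by observing that the two machine models coincide once no semiring values occur in the input, and your two simulation directions are essentially that observation spelled out. The backward direction and the bookkeeping remarks about matching conventions are fine.

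There is, however, one step in the forward direction that fails as written. You argue that every transition whose last component is $\mathcal{X}$ contributes weight $x^R_n = e_\oplus$ and hence ``can be removed from $\delta$ without changing $\lVert M\rVert(s)$.'' Removing a summand equal to $e_\oplus$ is harmless only when the sum remains nonempty afterwards. The paper's definition of $val$ stipulates that the \emph{empty} sum has value $e_\otimes$, i.e.\ a configuration with no applicable transition is a halting configuration contributing $e_\otimes$. Hence if some reachable configuration $c=(q,x,n)$ has \emph{only} $\mathcal{X}$-weighted transitions available, its value before your surgery is $\bigoplus_\tau e_\oplus \otimes val(\tau(c)) = e_\oplus$, while after deleting those transitions it becomes the empty sum, $e_\otimes$; since $e_\oplus \neq e_\otimes$ by the semiring axioms, this can change $\lVert M\rVert(s)$. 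The fix is immediate: instead of deleting such transitions, relabel their weight from $\mathcal{X}$ to the constant $e_\oplus$ (adding $e_\oplus$ to $R'$ if necessary). This leaves the transition structure, and hence the set of computation paths and all empty-sum situations, untouched, and merely replaces each weight $wt(\tau,c)=x^R_n=e_\oplus$ by the identical constant $e_\oplus$; the resulting machine uses only finitely many constant weights and never consults the semiring layer, so it is a weighted Turing machine as you intend. With that correction your argument goes through.
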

This can be easily verified by observing that the definitions of SRTMs and weighted Turing machines are equivalent, when no weights are allowed as inputs. It follows that over the Boolean semiring, when the input does not contain semiring values, SRTMs are equivalent to non-deterministic Turing machines.
\medskip 

\noindent{\bf Comparison with the SRTM model in \cite{DBLP:journals/jair/EiterK23}.}
Different from requiring 
$\delta \subseteq \left(Q \times \Sigma \right)\times \left(Q \times \Sigma\right) \times \{-1,1\} \times (R' \cup \{\mathcal{X}\})$, 
\cite{DBLP:journals/jair/EiterK23} required $\delta \subseteq \left(Q \times (\Sigma \cup R) \right)\times \left(Q \times (\Sigma \cup R)\right) \times \{-1,1\} \times R$ is a weighted transition relation, s.t.\ for each $((q_1, \sigma_1), (q_2, \sigma_2), d , r) \in \delta$ the following holds:
\begin{enumerate}
    \item if $\sigma_1 \in R$ or $\sigma_2 \in R$, then $\sigma_1 = \sigma_2$,
    \item $r \in R'$ or $r = \sigma_1 \in R$
    \item if $\sigma_1 \in R$, then
    \begin{enumerate}
        \item for all $\sigma_1' \in R$ 
    we have $((q_1, \sigma_1'),$ $(q_2, \sigma_1'), d, \sigma_1') \in \delta$ or
        \item for all $\sigma_1' \in R$ 
    we have $((q_1, \sigma_1'),$ $(q_2, \sigma_1'), d, r) \in \delta$.
    \end{enumerate}
\end{enumerate}
We see two differences here: (D1) In the old model of \cite{DBLP:journals/jair/EiterK23}, semiring values were sole 
contents of a tape cell rather than 
shared with alphabet symbols. 
(D2) Instead of using the placeholder $\mathcal{X}$, \cite{DBLP:journals/jair/EiterK23} used an infinite transition relation allowing, however, only (in 2.\ and 3.) to transition with a value from $R'$ or the value on the tape as a weight.

Both differences have serious consequences. 
Together with constraint 1., which disallows overwriting semiring values, (D1) 
effects that the old machine model cannot fully handle arbitrarily long chains of consecutive semiring values on the tape: as a machine 
has only finitely many states, it cannot navigate to the $k$-th semiring value in such a 
sequence for some 
large enough $k$, 
as states are bound to repeat   
and distinguishing the head position is impossible.
Consequently, the conditional product function in Example~\ref{ex:conditional-product} is not computable by the model of \cite{DBLP:journals/jair/EiterK23}, which is formally proved in~\Cref{appendix:full_proofs}.

As for (D2), 
consider in the new model 
tuples $t_1 = ((q, \mathcal{X}), (q', \mathcal{X}), d, r^*)$ and $t_2 = ((q, \mathcal{X}),$ $(q', \mathcal{X}), d, \mathcal{X})$ in $\delta$.
When we transition from $(q, \mathcal{X})$ to $(q', \mathcal{X})$ in direction $d$, we have two transitions, one by $t_1$ with weight $r^*$  and one by $t_2$ with the weight of the semiring value at the head position.
To express the same in the old model, we would need to have $((q, r), (q', r), d, r^*) \in \delta$ and $((q, r), (q', r), d, r) \in \delta$ for all $r \in R$. This would 
work as in the new model, except when $r = r^*$. In this case, both transitions would collapse into the single transition $((q, r^*), (q', r^*), d, r^*)$, which would only be executed once as $\delta$ is a set. This, however, means that the old model is in fact not oblivious to the semiring values on the tape.   

Both of these effects are clearly undesirable and not in the spirit of the definition. 
We thus consider it reasonable to prove a Fagin's Theorem for our SRTM model and not the original one in~\cite{DBLP:journals/jair/EiterK23}.
After this main result has been established, we will return to a more detailed analysis of the consequences and the relation between the two models in~\Cref{sec:comparison_old}

\section{A Fagin Theorem for \texorpdfstring{\NPnewinf{$\mathcal{R}$}}{NP(R)\_{new}}} 
\label{sec:fagin-thm} 

In this section, 
we introduce a weighted logic to characterize \NPnewinf{$\mathcal{R}$} and prove a Fagin-style theorem (Theorem~\ref{fagin}). Weighted logics, with weights coming from an arbitrary semiring, were developed originally to obtain a weighted version of the B\"uchi--Elgot--Trakhtenbrot Theorem, showing that a certain weighted monadic second-order logic has the same expressive power on words as weighted automata~\cite{DBLP:journals/tcs/DrosteG07}. The intuition behind the idea is that we incorporate the machinery of a semiring as part of the logical vocabulary, and thus we have a  multiplicative conjunction, an additive disjunction, and similarly, a multiplicative universal quantifier and an additive existential quantifier, as well as all elements of the semiring in question as logical constants (interpreted by themselves).


\medskip
\noindent
{\bf A suitable weighted existential second-order logic.} A  \emph{vocabulary} (or \emph{signature}) $\mu$ is a pair $(\rel{\mu} \cup \rel{\mu}^{\text{w}}, \ar{\mu})$ where $\rel{\mu}$ and $\rel{\mu}^{\text{w}}$ are disjoint sets of relation symbols and $\ar{\mu} \colon \rel{\mu} \cup \rel{\mu}^{\text{w}}\longrightarrow \Np$ is the arity function. Given a semiring $\mathcal{R}$, a \emph{$\mu$-structure}  $\mfa$ is a pair $(A, \inter{\mfa})$ where $A$ is a set, called the \emph{universe of\/ $\mfa$}, and $\inter{\mfa}$ is an \emph{interpretation}, which maps every symbol $S \in \rel{\mu}$ to a relation $S^\mfa \subseteq A^{\ar{\mu}(S)}$ 
and every symbol $W \in \rel{\mu}^{\text{w}}$ to an  $\mathcal{R}$-relation, i.e., a mapping $W^\mfa:  A^{\ar{\mu}(W)} \longrightarrow \mathcal{R}$. We assume that
structures are finite, i.e., $A$ is a finite set. 
A structure  $\mfa$ is 
\emph{ordered} if it is 
on a vocabulary $\mu \cup\{<\}$ where  $<$ is interpreted as a linear ordering of $A$ with endpoints. 
$\str{\mu}_<$ denotes the class of \emph{all finite ordered $\mu$-structures}.

We provide a countable set $\mcv$ of first and second-order variables, where lower case letters like $x$ and $y$ denote first-order variables and capital letters like $X$ and $Y$ denote second-order variables. Each second-order variable $X$ comes with an associated arity, denoted by $\mathrm{ar} (X)$.

In all of this section, let $\mu$  be a signature and $\Sr$ a semiring. Next, we define first-order formulas $\beta$ 
and weighted first-order formulas $\varphi$ over $\mu$ and $\Sr$
as follows.
\begin{define}[Syntax of $\mathrm{wFO}$]  \label{wfo} 
The syntax of $\mathrm{wFO}[\mathcal{R}]$ is given as follows:
\vspace{-0.2cm}
\begin{align*}
\beta &\ddefas  
 S(x_1, \ldots, x_n) \mid \lnot \beta \mid \beta \lor \beta \mid \exists x.\beta \\
\varphi &\ddefas \beta \mid \sre \mid  
 W(x_1, \ldots, x_m) \mid \varphi \fplus \varphi \mid \varphi \fmal \varphi \mid \bigfplusop x.\varphi \mid \bigfmalop x. \varphi 
\end{align*}
where $S \in \rel{\mu}$, and $n = \ar{\mu}(S)$; $W \in \rel{\mu}^{\text{w}}$,  and $m = \ar{\mu}(W)$; $x, x_1, \ldots, x_n \in \mcv$ and  $x, x_1, \ldots, x_m \in \mcv$ are first-order variables; and  $\sre \in \mathcal{R}$. 
\end{define}

Likewise, we define the formulas of weighted second-order logic as follows:

\begin{define}[Syntax of $\mathrm{wSO}$]  \label{wsol} 

The syntax of $\mathrm{wSO}[\mathcal{R}]$ is given
as follows:
\vspace{-0.2cm}
\begin{align*}
\beta &\ddefas  X(x_1, \ldots, x_n) \mid  
 S(x_1, \ldots, x_n) \mid \lnot \beta \mid \beta \lor \beta \mid \exists x.\beta \mid \exists X.\beta\\
\varphi &\ddefas \beta \mid \sre \mid  
 W(x_1, \ldots, x_n) \mid \varphi \fplus \varphi \mid \varphi \fmal \varphi \mid \bigfplusop x.\varphi \mid \bigfmalop x. \varphi \mid \bigfplusop X. \varphi \mid \bigfmalop X. \varphi,
\end{align*}
with $S \in \rel{\mu}$ and $n = \ar{\mu}(S)=\mathrm{ar} (X)$; $W \in \rel{\mu}^{\text{w}}$ and  $m = \ar{\mu}(W)$; $x, x_1, \ldots, x_n \in \mcv$  and  $x, x_1, \ldots, x_m \in \mcv$ are first-order variables;  $X \in \mcv$ is a second-order variable; and $\sre \in \mathcal{R}$. 
\end{define}
In the second layer of our syntax, even though we have new predicates  $ W(x_1, \ldots, x_n)$ that will be interpreted semantically as semiring-annotated relations, we only allow for second-order quantification over the variables of the initial layer that are meant to be interpreted as regular set-theoretic relations.
We also use the usual abbreviations $\land$, $\forall$, $\to$, and $\leftrightarrow$. By $\fo{\mu}$ and $\wfo{\mu,\Sr}$ we denote the sets of all first-order formulas over $\mu$ (they do not involve the symbols in $\rel{\mu}^{\text{w}}$) and all weighted first-order formulas over $\mu$ and $\Sr$, respectively, and by $\mso{\mu}$ and $\wmso{\mu,\Sr}$ we denote the sets of all second-order formulas over $\mu$ and all weighted second-order formulas over $\mu$ and $\Sr$, respectively.

The notion of \emph{free variables} is 
as usual, i.e., the operators $\exists, \forall, \bigfplusop$, and $\bigfmalop$ bind variables;
$\free(\varphi)$ is the set of all free variables of $\varphi$. A formula $\varphi$ with $\free(\varphi) = \emptyset$ is called a \emph{sentence}. For a tuple $\vecg{\varphi} = (\varphi_1, \ldots, \varphi_n) \in \wmso{\mu, \Sr}^n$, we define $\free(\vecg{\varphi}) = \bigcup_{i=1}^n \free(\varphi_i)$.

We define the semantics of $\mathrm{SO}$ and $\mathrm{wSO}$ as follows.
Let $\mfa = (A, \inter{\mfa})$ be a $\mu$-structure, and $\mcv$ be a set of first-order and second-order variables. 
A $(\mcv, \mfa)$-assignment $\rho$ is a function $\rho \colon \mcv \longrightarrow A \cup \bigcup_{n \geq 1}\mcp(A^n)$ such that, whenever $x \in \mcv$ is a first-order variable, we have $\rho(x) \in A$, and whenever $X \in \mcv$ is a second-order variable, we have $\rho(X) \subseteq A^{\mathrm{ar} (X)}$.
Let $\dom(\rho)$ be the domain of $\rho$. For a first-order variable $x \in \mcv$ and an element $a \in A$, the \emph{update} $\rho[x \mapsto a]$ is defined through $\dom(\rho[x \mapsto a]) = \dom(\rho) \cup \{x\}$, $\rho[x \mapsto a](\mcx) = \rho(\mcx)$ for all $\mcx \in \mcv\setminus\{x\}$, and $\rho[x \mapsto a](x) = a$. For a second-order variable $X \in \mcv$ and a set $I \subseteq A^{\mathrm{ar} (X)}$, the update $\rho[X \mapsto I]$ is defined in a similar fashion. By $\mfa_\mcv$ we denote the set of all $(\mcv, \mfa)$-assignments.

For $\rho \in \mfa_\mcv$ and $\beta \in \mso{\tau}$ the relation ``$(\mfa, \rho)$ satisfies $\beta$'' is defined as usual as
\begin{align*}
\
 (\mfa, \rho) \models S(x_1, \ldots, x_n)&& \Longleftrightarrow &\hspace{1em}  x_1, \ldots, x_n \in \dom(\rho) \text{ and } (\rho(x_1), \ldots, \rho(x_n)) \in S^\mfa, S \in \rel{\mu}\\
 (\mfa, \rho) \models X(x_1,..., x_n) && \Longleftrightarrow & \hspace{1em}x_1,..., x_n, X \in \dom(\rho) \text{, } \langle\rho(x_1), \dots,  \rho(x_n)\rangle \in \rho(X) \\
 (\mfa, \rho) \models \lnot \beta && \Longleftrightarrow & \hspace{1em}(\mfa, \rho) \models \beta \text{ does not hold}\\
 (\mfa, \rho) \models \beta_1 \lor \beta_2 && \Longleftrightarrow & \hspace{1em}(\mfa, \rho) \models \beta_1 \text{ or } (\mfa, \rho) \models \beta_2\\
 (\mfa, \rho) \models \exists x. \beta && \Longleftrightarrow & \hspace{1em}(\mfa, \rho[x \mapsto a]) \models \beta \text{ for some } a \in A\\
 (\mfa, \rho) \models \exists X. \beta && \Longleftrightarrow & \hspace{1em}(\mfa, \rho[X \mapsto I]) \models \beta \text{ for some } I \subseteq A.
\end{align*}

Let $\varphi \in \wmso{\mu,\Sr}$ and $\mfa \in \str{\mu}_<$. Let $a_1, \dots, a_k$ be an enumeration of $\mfa$ according to 
$<$ 
that serves as the interpretation of $<$, and for each integer $n$, let $I_1^n, \dots, I_{l_n}^n$  be an enumeration of the subsets of $A^n$ according to the lexicographic ordering induced by the interpretation of $<$.   The \emph{(weighted) semantics} of $\varphi$ is a mapping $\ser{\varphi}(\mfa,\cdot) \colon \mfa_\mcv \longrightarrow \Sr$ inductively defined as 
 \vspace{-0.3cm}
\begin{align*}
\ser{\beta}(\mfa, \rho)\quad &=&& \hspace{-3em} \begin{cases} \sone & \text{if } (\mfa, \rho) \models \beta \\ \szero & \text{otherwise} \end{cases} & \\
\ser{W(x_1, \dots, x_n)}(\mfa, \rho)\quad &=&& \hspace{-3em} W^\mfa(\rho(x_1), \dots, \rho(x_n)), \ \text{where} \ W \in \rel{\mu}^{\text{w}} & \\
\ser{\sre}(\mfa, \rho)\quad &=&& \hspace{-3em} \sre\\
\ser{\varphi_1 \fplus \varphi_2}(\mfa, \rho) \quad&=&& \hspace{-3em} \ser{\varphi_1}(\mfa, \rho) \oplus \ser{\varphi_2}(\mfa, \rho) & \\
\ser{\varphi_1 \fmal \varphi_2}(\mfa, \rho)\quad &=&& \hspace{-3em} \ser{\varphi_1}(\mfa, \rho) \otimes \ser{\varphi_2}(\mfa, \rho) & 
\nop{********
\\
\ser{\bigfplusop x. \varphi}(\mfa, \rho)\quad &=&& \hspace{-3em}\bigoplus_{a \in A} \ser{\varphi}(\mfa, \rho[x \mapsto a]) & \\
\ser{\bigfmalop x. \varphi}(\mfa, \rho) \quad&=&& \hspace{-3em}   \bigotimes_{1\leq i \leq k} \ser{\varphi}(\mfa, \rho[x \mapsto a_i]) & \\
\ser{\bigfplusop X. \varphi}(\mfa, \rho)\quad &=&& \hspace{-3em} \bigoplus_{I \subseteq A^{\mathrm{ar} (X)}} \ser{\varphi}(\mfa, \rho[X \mapsto I]) & \\
\ser{\bigfmalop X. \varphi}(\mfa, \rho)\quad &=&& \hspace{-3em}  \bigotimes_{1 \leq i \leq l_{\mathrm{ar} (X)}} \ser{\varphi}(\mfa, \rho[X \mapsto I_i^{\mathrm{ar} (X)}]).&
**********} 
\end{align*}

\vspace{-1.5\baselineskip}

\begin{align*}
\!\!\!\ser{\bigfplusop x. \varphi}(\mfa, \rho) \  &=\ \bigoplus_{a \in A} \ser{\varphi}(\mfa, \rho[x \mapsto a]) & \ser{\bigfplusop X. \varphi}(\mfa, \rho) \ &= \!\bigoplus_{I \subseteq A^{\mathrm{ar} (X)}} \ser{\varphi}(\mfa, \rho[X \mapsto I]) & \\
\!\!\!\ser{\bigfmalop x. \varphi}(\mfa, \rho)\ &= \ \bigotimes_{1\leq i \leq k} \ser{\varphi}(\mfa, \rho[x \mapsto a_i])  & \ser{\bigfmalop X. \varphi}(\mfa, \rho) \ &=  \! \bigotimes_{1 \leq i \leq l_{\mathrm{ar} (X)}} \ser{\varphi}(\mfa, \rho[X \mapsto I_i^{\mathrm{ar} (X)}]).  &
\end{align*}
Thanks to the lexicographic ordering, our product quantifiers have a well-defined semantics. Note that if the semiring is commutative, 
the semantics of the universal quantifiers
is defined by using any order for the factors in the products. 
Semantics for weighted logics where predicates are evaluated as power series already appears in~\cite{doi:10.3233/FUN-2008-843-402}.

\begin{define}[Weighted existential  second-order logic] \label{wesol}
Weighted existential second-order logic over the semiring $\mathcal{R}$, in symbols  $\mathrm{wESO}[\mathcal{R}]$, is  the fragment of $\mathrm{wSO}[\mathcal{R}]$ which includes all formulas of $\mathrm{wFO}[\mathcal{R}]$ (with the addition of formulas $X(x_1, \ldots, x_n)$, where $X$ is a relational variable, in the atomic case of the $\beta$s) and is closed under the quantifiers $\exists X$ and  $\bigfplusop X$.
\end{define}

Finally, it is worth pointing out that the first-order logic  (without negation) used  as query language in~\cite{Green:2011} 
for conjunctive queries over semiring-annotated databases is a fragment of the weighted logic  $\mathrm{wFO}$ introduced in this section. More precisely, we would simply have to consider $\mu$ with $\rel{\mu} =\emptyset$, and the following formulas with the same semantics as above:
\vspace{-0.2cm}
\begin{align*}
\varphi &\ddefas 
 W(x_1, \ldots, x_n) \mid \varphi \fplus \varphi \mid \varphi \fmal \varphi \mid \bigfplusop x.\varphi \mid \bigfmalop x. \varphi.
\end{align*}


\noindent
{\bf Capturing the quantitative complexity class \texorpdfstring{\NPnewinf{$\mathcal{R}$}}{NP(R)\_{new}}.}
Let  \( \mathfrak{A} = (A,<, S_1, \dotsc, S_k) \)  be  an ordered structure with \(A = \{a_1, \dotsc, a_n\}\). 
To encode $\mathfrak{A}$ as a string of semiring values,  we might first take \(\mathrm{enc}(S_i) \) for \(S_i \in \rel{\mu}^{\text{w}}\) to be the string of length $|A^{\mathrm{ar}(S_i)}|$ where the \(i\)th element is the weight of the \(i\)th  (according to the lexicographic ordering induced by $<$) $\mathrm{ar}(S_i)$-tuple of elements from $A$.
For \(S_i \in \rel{\mu}\) we proceed similarly, namely by defining the weight of each tuple as the letter \(0\)
if the tuple is not in \(S_i\), and as the letter \(1\) if the tuple is in \(S_i\).
Then
\(\mathrm{enc}(\mathfrak{A}) = 0^n1 \mathrm{enc}(S_1) \ \dotsb \ \mathrm{enc}(S_k)\),
where \(0\) and \(1\)
are letters.

Next we need to define what it means for the quantitative complexity class \NPnewinf{$\mathcal{R}$} to be characterized by a suitable weighted logic. 

\begin{define}\label{ch}
Consider a weighted logic $\mathrm{L}[\Sr]$ (with weights in a semiring $\Sr$). We say that $\mathrm{L}[\Sr]$ \emph{captures} \NPnewinf{$\mathcal{R}$} over ordered structures in the vocabulary $\mu=\{S_1, \dots, S_j\}$ if the following holds:
(1) 
for every $\mathrm{L}[\Sr]$-sentence $\phi$, some $P\in \NPnewinf{\mathcal{R}}$ exists s.t.\ $P(\text{enc}(\mathfrak{A})) = \ser{\phi}(\mathfrak{A})$ for every finite ordered $\mu$-structure $\mathfrak{A}$, and
(2) 
for every $P\in \NPnewinf{\mathcal{R}}$, some $\mathrm{L}[\Sr]$-sentence $\phi$ exists s.t.\ $P(\text{enc}(\mathfrak{A})) = \ser{\phi}(\mathfrak{A})$ for every finite ordered $\mu$-structure $\mathfrak{A}$.
\end{define}

With this definition at hand, we are finally able to state the main result of this paper, our weighted version of Fagin's theorem.
A proof is given in \cref{appendix:full_proofs}.

\begin{theorem}\label{fagin}
For each semiring $\mathcal{R}$, the logic $\mathrm{wESO}[\mathcal{R}]$ captures the quantitative complexity class \NPnewinf{$\mathcal{R}$} over ordered structures. 
\end{theorem}

\begin{remark}
    Theorem~\ref{fagin} also holds for finite structures without an ordering $<$ if the semiring $\mathcal{R}$ is idempotent and commutative.
    The first direction of the proof above does not reference the order at all, and for the second direction we can sum the very last formula over all possible orders and employ this ``guessed'' order in the proof.
    Due to idempotence and the fact that the weight of the last formula does not depend on the guessed order,
    as the SRTM does not have access to it,
    this is equivalent to guessing an arbitrary order for the proof.
\end{remark}




\section{On \texorpdfstring{\NPnewinf{$\mathcal{R}$}}{NP(R)\_{new}} and \texorpdfstring{\NPoldinf{$\mathcal{R}$}}{NP(R)\_{old}}} 
\label{sec:comparison_old}
Finally, we discuss what the change in the SRTM definition means for the results in~\cite{DBLP:journals/jair/EiterK23}, beginning with the precise relation of \NPnewinf{$\mathcal{R}$} and the corresponding complexity class \NPoldinf{$\mathcal{R}$}  of ~\cite{DBLP:journals/jair/EiterK23}.
For this, we formalize the  ability of the old model to distinguish semiring values:
\begin{define}[Limited Distinction Function]
Let $\mathcal{R}$ be a semiring and $S \subseteq R$ be a finite subset of its values. 
Then the 
 function $\rec{S}\colon R 
\longrightarrow R$ 
such that $\rec{S}(x) = \bigoplus_{r \in S} r$ if $x \,{\in S}\,$ and $\rec{S}(x) =$ $ x \oplus \bigoplus_{r \in S} r$, otherwise, is a \emph{limited distinction function}. Furthermore, we denote by  
$\limrec = \{\rec{S} \mid S \subseteq R, S \text{ finite}\}$
 the set of all limited distinction functions over $\mathcal{R}$.
\end{define}
To equip the new model with the dubious skill of distinction, we introduce oracle machines:
\begin{define}[Limited Distinction Oracle]
An SRTM $M$ over $\mathcal{R}$ \emph{with oracle access to }$\limrec$ is an 8-tuple $M=(R', Q, \Sigma_{I}, \Sigma, \iota, \mathcal{X}, \sqcup, \delta \cup \delta')$, where 
$(R', Q, \Sigma_{I}, \Sigma, \iota, \mathcal{X}, \sqcup, \delta)$ is an SRTM,
 $\delta' \subseteq \left(Q \times \Sigma \right)\times \left(Q \times \Sigma\right) \times \{-1,1\} \times (R' \cup \{\mathcal{X}\} \cup \limrec)$ is a labeled transition relation where the last 
component is the label, and  $|\delta \cup \delta'| <\infty$. 
Here, if the label of a transition is $\rec{S}$ and the semiring value at the current head position is $r$, then $M$ transitions with weight $\rec{S}(r)$.
Given that $\mathcal{C}$ is a semiring complexity class over semiring $\mathcal{R}$ defined in terms of SRTMs under a time bound or  a tape size bound, $\mathcal{C}^{\limrec}$ is the set of SRTMs over $\mathcal{R}$ \emph{with oracle access to $\limrec$} that satisfies the same resource bound.
\end{define}

This allows us to prove that limited distinction is the old models' only additional capability, in~\Cref{comparison}.
In turn, we can compensate the lack of functions such as $f$ from Example~\ref{ex:conditional-product} in \NPoldinf{$\mathcal{R}$} by considering its closure under Karp-reductions adapted to the semiring context:

\begin{define}[Karp Surrogate-Reduction~\cite{DBLP:journals/jair/EiterK23}]
For a finite alphabet $\Sigma$ such that $S,V,E\not\in \Sigma$, the surrogate alphabet $s_{S,V,E}(\Sigma)$ is given by $(\Sigma \cup \{SV^nE \mid n \in \mathbb{N}\})^{*}$. Here, $SV^nE$ is a surrogate for the semiring with index $n$, that is, for a substitution $\sigma\colon \mathbb{N} \longrightarrow R$ and $x \in s_{S,V,E}(\Sigma)$, we let $\sigma(x)$ be the string in $(\Sigma \cup R)^*$ such that every occurrence of $SV^nE$ in $x$ is replaced by $\sigma(n)$.

Let $f_1$ and $f_2$ be functions $f_i\colon (\Sigma \cup R)^{*} \longrightarrow R, i = 1,2$. 
Then a Karp \emph{s-reduction} from $f_1$ to $f_2$ is a polynomial time computable function $T\colon s_{S,V,E}(\Sigma) \longrightarrow s_{S,V,E}(\Sigma)$ with a finite set $R' = \{r_1,\ldots,r_k\}\subseteq R$, $k\geq 0$, of semiring values such that for all $x \in s_{S,V,E}(\Sigma)$ and $\sigma\colon \mathbb{N} \longrightarrow R$ such that $\sigma(i{-}1)=r_i$, $1\leq i \leq k$, it holds that $f_1(\sigma(x)) = f_2(\sigma(T(x)))$. 
\end{define}

Intuitively, a Karp s-reduction is a polynomial-time transformation of the input with semiring values that is agnostic about the specific semiring values on the tape. For an example, it is not hard to see that upon such a reduction, the function $f$ from Example~\ref{ex:conditional-product} can be computed by some SRTM according to \cite{DBLP:journals/jair/EiterK23}. We specifically  consider the \emph{closure}\/ of \NPoldinf{$\mathcal{R}$} under Karp s-reductions, which is defined in general as follows.
\begin{define}[Karp s-reduction Closure]
Let $\mathcal{C}$ be a semiring complexity class over the semiring $\mathcal{R}$ defined in terms of SRTMs. Then $\karp(\mathcal{C})$ consists of all functions $f$ that can be Karp s-reduced to some $g \in \mathcal{C}$.
\end{define}

The main result of this section is then that 
Karp s-reduction closure in the one model and providing distinction oracle access in the other yields the same function expressiveness.  The proof of the result is in~\cref{appendix:full_proofs}.

\begin{theorem}[\NPoldinf{$\mathcal{R}$} versus \NPnewinf{$\mathcal{R}$}]\label{comparison}
For any commutative semiring $\mathcal{R}$, 
$\karp(\NPoldinf{\mathcal{R}}) = \NPnewinf{\mathcal{R}}^{\limrec}$.
\end{theorem}
We are restricted to commutative semirings  since $\NPoldinf{\mathcal{R}}$ is only defined for these.

\section{Relation of \NPnewinf{$\mathcal{R}$} to Other Quantitative Complexity Classes}
%

Having clarified the relation between \NPoldinf{$\mathcal{R}$} and \NPnewinf{$\mathcal{R}$}, we can clearly see that \NPoldinf{$\mathcal{R}$} lacks some important properties. On the one hand, it is not closed under Karp s-reductions, on the other hand, it exhibits undesirable distinction capabilities. 
With this in mind, we consider the two claims of~\cite{DBLP:journals/jair/EiterK23} regarding the relationship between \NPoldinf{$\mathcal{R}$} and other quantitative complexity classes in the literature that are falsified by these shortcomings and properly establish them for \NPnewinf{$\mathcal{R}$}.

\noindent{\bf Traditional complexity classes.}~Eiter and Kiesel presented in \cite{DBLP:journals/jair/EiterK23} a host of results that linked traditional complexity classes such as \#P, GapP, OptP, FP$^{\mathrm{NP}}_\|$, FPSPACE(poly) etc.\ to \NPoldinf{$\mathcal{R}$} for suitable semirings $\mathcal{R}$.  
The derivation of the results hinged on the completeness of the satisfiability problem over $\mathcal{R}$ (i.e., whether a non-zero value is computed) under Karp s-reductions, as the prototypical \NPoldinf{$\mathcal{R}$}-complete problem. The proof used the following 
lemma, here stated for \NPnewinf{$\mathcal{R}$}.
%
\begin{lemma}[claimed as Lemma~17 in \cite{DBLP:journals/jair/EiterK23} for \NPoldinf{$\mathcal{R}$}]
\label{lem:vred}
Let $\mathcal{R}$ be a semiring and $f_i \colon (\Sigma \cup R)^{*} \longrightarrow R, i =1,2$. If $f_2 \in \NPnewinf{\mathcal{R}}$ and $f_1$ is Karp s-reducible to $f_2$ then $f_1 \in \NPnewinf{\mathcal{R}}$.
\end{lemma}
That is, the claim was that \NPoldinf{$\mathcal{R}$} is closed under Karp s-reductions; as we have seen, this is not the case. The reason is that in the SRTM model of \cite{DBLP:journals/jair/EiterK23}, we do not have the possibility to navigate to the $(n-k)^{\text{th}}$ semiring value on the tape, as required in the proof. However, as we prove in~\Cref{appendix:fixed_proofs}, the above statement for $\NPnewinf{\mathcal{R}}$ is correct.

 \emph{Weighted  quantified Boolean logic} over a  commutative  semiring  $\mathcal{R}$ 
 uses a set $\mathcal{V}$  of propositional variables and obeys the following grammar for \emph{weighted Quantified Boolean Formulas}: $\alpha ::= k \mid v \mid \neg v \mid \alpha \fplus \alpha \mid \alpha \fmal \alpha \mid \bigfplus v \alpha \mid \bigfmal v \alpha$,
where $k \in R$ and $v \in \mathcal{V}$. Given a weighted QBF $\alpha$ over a commutative semiring $\mathcal{R} = (R, \fplus, \fmal, \szero, \sone)$ and variables from $\mathcal{V}$ as well as a propositional interpretation $\mathcal{I}$ of $\mathcal{V}$, the semantics $\llbracket \alpha \rrbracket_{\mathcal{R}}(\mathcal{I})$ of $\alpha$ over $\mathcal{R}$ w.r.t.\ $\mathcal{I}$ is 
as follows:
 \vspace{-0.5cm}
\begin{align*}
    \llbracket k \rrbracket_{\mathcal{R}} (\mathcal{I})\ \  &= \ \ k\\
    \llbracket l \rrbracket_{\mathcal{R}} (\mathcal{I})\ \  &= \ \ \left\{\begin{array}{cc}
        \sone & l \in \mathcal{I} \\
        \szero & \text{ otherwise. }
    \end{array} \right. (l \in \{v, \neg v\})\\
    \llbracket \alpha_1 \fplus \alpha_2 \rrbracket_{\mathcal{R}} (\mathcal{I}) &= \llbracket \alpha_1\rrbracket_{\mathcal{R}} (\mathcal{I}) \fplus \llbracket\alpha_2 \rrbracket_{\mathcal{R}} (\mathcal{I}) &  \hspace{-1cm} \llbracket \alpha_1 \fmal \alpha_2 \rrbracket_{\mathcal{R}} (\mathcal{I}) &= \llbracket \alpha_1\rrbracket_{\mathcal{R}} (\mathcal{I}) \fmal \llbracket\alpha_2 \rrbracket_{\mathcal{R}} (\mathcal{I})
    \\
    \llbracket \bigfplus v \alpha \rrbracket_{\mathcal{R}} (\mathcal{I}) &= \llbracket \alpha \rrbracket_{\mathcal{R}} (\mathcal{I}_{v}) \fplus \llbracket \alpha \rrbracket_{\mathcal{R}} (\mathcal{I}_{\neg v})
      & \hspace{-1cm} \llbracket \bigfmal v \alpha \rrbracket_{\mathcal{R}} (\mathcal{I}) &= \llbracket \alpha \rrbracket_{\mathcal{R}} (\mathcal{I}_{v}) \fmal \llbracket \alpha \rrbracket_{\mathcal{R}} (\mathcal{I}_{\neg v})
\nop{******** non-compact form of sums/products
     \llbracket \alpha_1 \fplus \alpha_2 \rrbracket_{\mathcal{R}} (\mathcal{I})\ \  &= \ \ \llbracket \alpha_1\rrbracket_{\mathcal{R}} (\mathcal{I}) \fplus \llbracket\alpha_2 \rrbracket_{\mathcal{R}} (\mathcal{I}) \\
    \llbracket \alpha_1 \fmal \alpha_2 \rrbracket_{\mathcal{R}} (\mathcal{I})\ \ &= \ \ \llbracket \alpha_1\rrbracket_{\mathcal{R}} (\mathcal{I}) \fmal \llbracket\alpha_2 \rrbracket_{\mathcal{R}} (\mathcal{I})
    \\ 
    \llbracket \bigfplus v \alpha \rrbracket_{\mathcal{R}} (\mathcal{I})\ \ &= \ \ \llbracket \alpha \rrbracket_{\mathcal{R}} (\mathcal{I}_{v}) \fplus \llbracket \alpha \rrbracket_{\mathcal{R}} (\mathcal{I}_{\neg v})
    \\
    \llbracket \bigfmal v \alpha \rrbracket_{\mathcal{R}} (\mathcal{I})\ \ &= \ \ \llbracket \alpha \rrbracket_{\mathcal{R}} (\mathcal{I}_{v}) \fmal \llbracket \alpha \rrbracket_{\mathcal{R}} (\mathcal{I}_{\neg v})
********} 
\end{align*}
where $\mathcal{I}_{v}= \mathcal{I}\setminus\{\neg v\}\cup \{v\}$ and $\mathcal{I}_{\neg v}= \mathcal{I}\setminus\{v\}\cup \{\neg v\}$.


We further focus on $\bigfplus$BFs, i.e., the weighted fully quantified BFs that contain only sum quantifiers (i.e.\ $\bigfplus v$) and we introduce their evaluation problem as 
\textsc{SAT($\mathcal{R}$)}: given a $\bigfplus$BF $\alpha$ over $\mathcal{R}$ compute $\llbracket \alpha \rrbracket_{\mathcal{R}}(\emptyset)$.
The \NPoldinf{$\mathcal{R}$}-completeness of this problem was claimed in Theorem~18 of~\cite{DBLP:journals/jair/EiterK23} and is here corrected by considering \NPnewinf{$\mathcal{R}$}.
\begin{theorem}[claimed as Theorem~18 in \cite{DBLP:journals/jair/EiterK23} for \NPoldinf{$\mathcal{R}$}]
\label{thm:satnp}
\textsc{SAT}($\mathcal{R}$) is \NPnewinf{$\mathcal{R}$}-complete with respect to  Karp s-reductions, for every semiring $\mathcal{R}$.
\end{theorem}
While the proof for this theorem may look sound in~\cite{DBLP:journals/jair/EiterK23}, it implicitly assumes that semiring values are distinguishable during Karp s-reductions, which they are not. 

Inspection reveals that merely Lemma~17 and Theorem~18 of~\cite{DBLP:journals/jair/EiterK23} are flawed. 
In particular,
the results in Section~6 of~\cite{DBLP:journals/jair/EiterK23} about the relationship of \NPoldinf{$\mathcal{R}$} to classical counting and function computation classes for various semirings $\mathcal{R}$ are \emph{not} affected by these oversights. Furthermore, since the respective proofs there only refer to Lemma~17 and Theorem~18 but not to the definition of the SRTM model itself, they still work for any similar machine model that would satisfy  Lemma~17 and Theorem~18, and in particular for the STRM model that we introduce in this work; respective proofs for \cref{lem:vred,thm:satnp}
are provided in \cref{appendix:fixed_proofs}.
Consequently, 
all the results about linking \NPoldinf{$\mathcal{R}$} to classical complexity classes $\mathcal{C}$ in Section~6 of \cite{DBLP:journals/jair/EiterK23}
actually hold for \NPnewinf{$\mathcal{R}$}, spanning a rich landscape for different semirings $\mathcal{R}$.

\noindent
{\bf \texorpdfstring{\NPnewinf{$\mathcal{R}$}}{NP(R)\_{new}}-complete problems over input alphabet \texorpdfstring{$\emptyset$}{\{\} }.}
If we consider functions $f\colon R^* \longrightarrow R$ that are computable in \NPnewinf{$\mathcal{R}$}, it may seem that we cannot add much qualitative information in  the input string as SRTMs can only distinguish semiring values on the tape in a very limited manner. That is, we can transition with the value on the tape; if it happens to be zero, then the remaining computation does not have an effect. So we can merely ``check'' whether a semiring value on the tape is non-zero.

However, this is sufficient to gain \NPnewinf{$\mathcal{R}$}-completeness
(for the proof, see \cref{appendix:full_proofs}).

\vspace{0pt}

\begin{proposition}
For every semiring $\mathcal{R}$, there exists a function $f\colon R^* \longrightarrow R$ that is \NPnewinf{$\mathcal{R}$}-complete with respect to Karp s-reductions.
\end{proposition}

\noindent{\bf Class \NPfinite{$\mathcal{R}$}.}~
Let us denote by \NPfinite{$\mathcal{R}$} the complexity class introduced as NP$(\mathcal{R})$ for a semiring $\mathcal{R}$ in~\cite{DBLP:conf/mfcs/BadiaDNP24}, i.e., the class of all functions without semiring values in the input, i.e. $f\colon\Sigma_{I}^* \longrightarrow R$, where $\Sigma_{I}$ is the input alphabet as in \cref{srtm}, computable by some SRTM in polynomial time over $\mathcal{R}$.
Clearly, \NPfinite{$\mathcal{R}$} is subsumed by \NPnewinf{$\mathcal{R}$}; as it turns out, it is strictly less expressive. The proof of the 
following proposition is relegated to~\cref{appendix:full_proofs}.
\begin{proposition}
There exists a function in $\NPnewinf{\mathcal{R}}$ that is not in $\NPfinite{\mathcal{R}}$.
\end{proposition}

\section{Conclusion}

Our Definition~\ref{srtm} for Semiring Turing Machines improves the limitations of  the central definition from~\cite{DBLP:journals/jair/EiterK23} and allows us to recover all the complexity results established there, linking computations over semirings to many classical complexity classes. \cref{comparison} provides the exact relation between \NPoldinf{$\mathcal{R}$}  (from~\cite{DBLP:journals/jair/EiterK23}) and  our new class \NPnewinf{$\mathcal{R}$}. Furthermore, Theorem~\ref{fagin} establishes a logical characterization of the  quantitative complexity class 
\NPnewinf{$\mathcal{R}$} using a  weighted existential second-order logic. This solves an open problem posed in~\cite{DBLP:journals/jair/EiterK23}. 

The new complexity class \NPnewinf{$\mathcal{R}$} may be fruitfully used for analyzing problems in applications. 
As an example, we briefly illustrate this for establishing the combined complexity of conjunctive queries (CQs) and unions of conjunctive queries (UCQs) in databases over semirings (for details see Appendix~\ref{appendix:UCQ}). 
For CQs, this problem can be formulated as follows:
given a positive, single-rule datalog program $\Pi = \{ q \leftarrow r_1(\vec{Y_1}), \dots, r_n(\vec{Y_n}) \}$
and a semiring-weighted extensional database $D$ over a commutative semiring $\mathcal{R}$ as input, compute the label of $q$ (the semantics of the program --or CQ-- is given as in~\cite{Green:2011} in terms of sums of products). 
Similarly, for UCQs the problems can be formulated as follows:
given a set of positive, single-rule datalog programs $\{\Pi_1, \dots, \Pi_m\}$
and a semiring-weighted extensional database $D$ over a commutative semiring $\mathcal{R}$ as input, compute the sum of the labels of $q_1, \dots, q_n$ (the semantics of this set of programs --or UCQ-- is given as in~\cite{Green:2011} in terms of sums of products). 

Putting together a result from~\cite{DBLP:journals/jair/EiterK23} on CQs and 
new results on UCQs (see Appendix~\ref{appendix:UCQ}), 
we obtain the following result.

\begin{theorem}
\label{thm:harddata}
For any commutative semiring $\mathcal{R}$, the combined complexity of both CQs and UCQs is \NPnewinf{$\mathcal{R}$}-complete with respect to Karp s-reductions.
\end{theorem}

Consequently, we obtain that the combined complexity of CQs and UCQs in bag semantics (with semiring is $\mathbb{N}$) is $\#$P-complete and that for the provenance semiring of polynomials with coefficients on the natural numbers the problems are $\#$P-hard (cf.\ also~\cite[footnote 8]{Green:2011}).
\vspace{0pt}

\noindent{\bf Outlook.} Lines of future  research include the study of further quantitative complexity classes defined on SRTMs, such as a natural generalization of the classical counting class $\#$L,  and corresponding Fagin-style characterizations (such a characterization for $\#$L itself  was obtained in \cite{arenas19descriptive}).


\bibliography{fagin}

\appendix

\section{Appendix}

\markboth{Appendix}{Appendix}

\subsection{Comparison to \texorpdfstring{$K$}{K}-Turing Machines} \label{appendix:comparision}
Recently, a model of computation taking  finite sequences of  elements of a given fixed semiring has been introduced in~\cite{DBLP:journals/corr/abs-2502-12939} under the name \emph{$K$-Turing Machine} to deal with the data complexity problem for first-order logic with semiring semantics~\cite{G-T}.  While both our model of computation and the $K$-TMs machines of~\cite{DBLP:journals/corr/abs-2502-12939} allow for computations with semiring values, there are multiple differences between the two.
Although the $K$-TMs model is deterministic in~\cite{DBLP:journals/corr/abs-2502-12939}, a non-deterministic version appears in  ~\cite{barlag2025logicalapproachesnondeterministicpolynomial}. However,  the complexity classes introduced for $K$-TMs machines are formed by decision problems, not function problems as in our case. As such, connections to  \#P and other counting classes are not within the purview of their machine model. By contrast, we defined Semiring Turing Machines in such a manner that they generalize NP to the semiring setting, in such a manner that \#P corresponds to the problems solvable in polynomial time by SRTM's over the natural numbers.  This difference explains as well why the existential second-order logic   in \cite{barlag2025logicalapproachesnondeterministicpolynomial} involves only quantification over Boolean relations.

While $K$-TMs machines seem to be less powerful than SRTMs in this aspect, there are others with respect to which they are definitely much stronger. The main point here is that $K$-TMs machines allow for the reuse of the results of semiring computations by storing them on the tape and in \emph{registers}. This makes it possible to compute $a_n$ in polynomial time in $n$, where 
$a_0 = 2, a_{n + 1} = a_{n}\cdot a_{n}$.
Clearly $a_n = 2^{2^{n}}$; hence, $K$-TMs machines over the natural numbers can compute a value that is double exponential in the input, which is not possible using SRTMs.

Additionally, $K$-TMs allow for branching based on the comparisons of semiring values in the registers and on the tape. While SRTMs can simulate such a behavior to a certain extent, semiring values that are on the tape cannot be compared to other values. The only thing we can do in this respect is transition with the weight on the tape to ensure that it is not zero.\

\subsection{Definitions from~\cite{DBLP:journals/jair/EiterK23}}
\label{appendix:old_definitions}
All of the contents of this subsection are copied from~\cite{DBLP:journals/jair/EiterK23}.
First, the old definition of Semiring Turing Machines:
\begin{define}[SRTM]
A \emph{Semiring Turing Machine} is a 7-tuple $M=(\mathcal{R}, R', Q, \Sigma, \iota, \sqcup, \delta)$, where
\begin{itemize}
    \item $\mathcal{R}$ is a commutative semiring,
    \item $R' \subseteq R$ is a finite set of semiring values, (intuitively, this is a set of fixed values that are ``known'' to $M$)
    \item $Q$ is a finite set of states,
    \item $\Sigma$ is a finite set of symbols (the tape alphabet),
    \item $\iota \in Q$ is the initial state,
    \item $\sqcup \in \Sigma$ is the blank symbol,
    \item $\delta \subseteq \left(Q \times (\Sigma \cup R) \right)\times \left(Q \times (\Sigma \cup R)\right) \times \{-1,1\} \times R$ is a weighted transition relation, where the last entry of the tuple is the weight. For each $((q_1, \sigma_1), (q_2, \sigma_2), d , r) \in \delta$ the following holds:
    \begin{enumerate}
        \item \emph{$M$ cannot write or overwrite semiring values}: \\
        if $\sigma_1 \in R$ or $\sigma_2 \in R$, then $\sigma_1 = \sigma_2$,
        \item \emph{$M$ can only make a transition with weight $r$ when $r$ is from $R'$ or under the head:} \\
        $r \in R'$ or $r = \sigma_1 \in R$
        \item \emph{$M$ cannot discriminate semiring values:} \\
        if $\sigma_1 \in R$, then
        \begin{enumerate}
            \item for all $\sigma_1' \in R$ 
        we have $((q_1, \sigma_1'),$ $(q_2, \sigma_1'), d, \sigma_1') \in \delta$ or
            \item for all $\sigma_1' \in R$ 
        we have $((q_1, \sigma_1'),$ $(q_2, \sigma_1'), d, r) \in \delta$.
        \end{enumerate}
    \end{enumerate}
\end{itemize} 
\end{define}

Next, we need the definition of the prototypical complete problem for our complexity class, for the last section of the appendix:
\begin{define}[Syntax]
Let $\mathcal{V}$ be a set of propositional variables and $\mathcal{R} = (R, \fplus, \fmal, \szero, \sone)$ be a commutative semiring. A \emph{weighted} QBF over $\mathcal{R}$ is of the form $\alpha$ given by the grammar
\begin{align*}
    \alpha ::= k &\mid v \mid \neg v \mid \alpha \fplus \alpha \mid \alpha \fmal \alpha \mid \bigfplus v \alpha \mid \bigfmal v \alpha
\end{align*}
where $k \in R$ and $v \in \mathcal{V}$. A variable $v \in \mathcal{V}$ is \emph{free} in a weighted QBF $\alpha$, if $\alpha$ has $v$ not in the scope of a quantifier $\bigfplus v$ or $\bigfmal v$.
A weighted \emph{fully quantified Boolean Formula} is a weighted QBF without free variables.
\end{define}
\begin{define}[Semantics]
\label{def:wQBF}
Given a weighted QBF $\alpha$ over a commutative semiring $\mathcal{R} = (R, \fplus, \fmal, \szero, \sone)$ and variables from $\mathcal{V}$ as well as a propositional interpretation $\mathcal{I}$ of $\mathcal{V}$, the semantics $\llbracket \alpha \rrbracket_{\mathcal{R}}(\mathcal{I})$ of $\alpha$ over $\mathcal{R}$ w.r.t.\ $\mathcal{I}$ is defined as follows:
\begin{align*}
    \llbracket k \rrbracket_{\mathcal{R}} (\mathcal{I}) &= k\\
    \llbracket l \rrbracket_{\mathcal{R}} (\mathcal{I}) &= \left\{\begin{array}{cc}
        \sone & l \in \mathcal{I} \\
        \szero & \text{ otherwise. }
    \end{array} \right. (l \in \{v, \neg v\})\\
    \llbracket \alpha_1 \fplus \alpha_2 \rrbracket_{\mathcal{R}} (\mathcal{I}) &= \llbracket \alpha_1\rrbracket_{\mathcal{R}} (\mathcal{I}) \fplus \llbracket\alpha_2 \rrbracket_{\mathcal{R}} (\mathcal{I})\\
    \llbracket \alpha_1 \fmal \alpha_2 \rrbracket_{\mathcal{R}} (\mathcal{I}) &= \llbracket \alpha_1\rrbracket_{\mathcal{R}} (\mathcal{I}) \fmal \llbracket\alpha_2 \rrbracket_{\mathcal{R}} (\mathcal{I})\\
    \llbracket \bigfplus v \alpha \rrbracket_{\mathcal{R}} (\mathcal{I}) &= \llbracket \alpha \rrbracket_{\mathcal{R}} (\mathcal{I}_{v}) \fplus \llbracket \alpha \rrbracket_{\mathcal{R}} (\mathcal{I}_{\neg v})\\
    \llbracket \bigfmal v \alpha \rrbracket_{\mathcal{R}} (\mathcal{I}) &= \llbracket \alpha \rrbracket_{\mathcal{R}} (\mathcal{I}_{v}) \fmal \llbracket \alpha \rrbracket_{\mathcal{R}} (\mathcal{I}_{\neg v})
\end{align*}
where $\mathcal{I}_{v}= \mathcal{I}\setminus\{\neg v\}\cup \{v\}$ and $\mathcal{I}_{\neg v}= \mathcal{I}\setminus\{v\}\cup \{\neg v\}$.
\end{define}
Weighted QBFs generalize QBFs in negation normal form (NNF), as negation is only allowed in front of 
variables. 
Intuitively, allowing negation in front of complex formulas would add the ability to test whether the value of a weighted QBF is zero. We can only test whether an atomic formula is false. Therefore, our variant is or at least seems less expressive. However, it fits better into the context of the problems we consider, where it also is not possible to perform such ``zero-tests'' for complex expressions.

Here, we further focus on $\bigfplus$BFs, i.e., the weighted fully quantified BFs that contain only sum quantifiers (i.e.\ $\bigfplus v$) and we introduce their evaluation problem as 
\begin{center}
\textsc{SAT($\mathcal{R}$)}: given a $\bigfplus$BF $\alpha$ over $\mathcal{R}$ compute $\llbracket \alpha \rrbracket_{\mathcal{R}}(\emptyset)$.
\end{center}

\subsection{Full versions of the proofs}
\label{appendix:full_proofs}

We defined our logic in two stages, a Boolean logic and a weighted part. We just note here that we could have defined the logic $\mathrm{wESO}[\mathcal{R}]$ without using Boolean values. For this, we would employ in the "Boolean part" only conjunction, universal quantifications and negations, since these suffice to express all formulas of second order logic equivalently and products and negations of $0$ and $1$ in the semiring yield  $0$  and $1$, respectively $1$  and  $0$. This would provide a proper generalization e.g. of Fagin's theorem into our weighted version. We chose the present way because it is very natural. Also, practitioners can use the Boolean part as they want to and are used to and the weighted part in addition when needed.  

\setcounter{theorem}{11}

\begin{theorem}
For each semiring $\mathcal{R}$, the logic $\mathrm{wESO}[\mathcal{R}]$ captures the quantitative complexity class \NPnewinf{$\mathcal{R}$}  for ordered structures, and for all finite structures if  $\mathcal{R}$  is commutative and idempotent. 
\end{theorem}

\newcommand{\mac}{M}
\begin{proof}
In order to establish (1) from Definition~\ref{ch},
we construct, for every $\mathrm{wESO}$-formula $\phi$, a polynomial-time SRTM $\mac$ with $\|\phi\|(\mfa) = \|\mac\|(\text{enc}(\mfa))$ where $\text{enc}(\mfa)$ is as follows. If $v_1, \ldots, v_m$ are the free variables of $\phi$, we encode every input structure $\mathfrak{A}= (A, R_1^\mathfrak{A}, \dotsc, R_{\ell_1}^\mathfrak{A},W_1^\mathfrak{A}, \dotsc, W_{\ell_2}^\mathfrak{A})$ and free variable assignments $v_1 = a_1, \dotsc, v_m = a_m$ for $\mac$ by
  \[\text{enc}(\mathfrak{A})= 0^n 1
                              \text{enc}(R_1^\mathfrak{A}) \cdot \dots \cdot \text{enc}(R_{\ell_1}^\mathfrak{A})
                              \text{enc}(W_1^\mathfrak{A}) \cdot \dots \cdot \text{enc}(W_{\ell_2}^\mathfrak{A})
                              \text{enc}(a_1) \cdot \dots \cdot \text{enc}(a_m), \]
  where $n = |A|$ and $\text{enc}(a_i) = \text{enc}(\{a_i\})$ if $v_i$ is a first-order variable.
  Let us assume that $A = \{0, \dotsc, n-1\}$.
  We proceed by induction on the structure of formulas
  and begin by showing that for every Boolean formula $\beta$,
  there exists a deterministic polynomial time Turing machine $\mac$
  such that $(\mathfrak{A}, a_1, \ldots, a_m) \models \beta$ iff $\mac$
  accepts $(\mathfrak{A}, a_1, \ldots, a_m)$,
  see also~\cite[Proposition 6.6]{DBLP:books/sp/Libkin04} (see the full argument in~\cref{appendix:full_proofs}).
  All of these constructions produce deterministic polynomial time Turing machines.
  Let $\mac_\beta$ be the Turing machine constructed for $\beta$.
  We obtain an SRTM $\mac$ from $\mac_\beta$
  with $\|\beta\|(\mfa) = \|\mac\|(\text{enc}(\mfa))$ by defining the weight of every transition
  by $\sone$.
  Note that as $\mac_\beta$ is deterministic,
  there is exactly one run with weight $\sone$ in $\mac$
  for every input accepted by $\mac_\beta$.
  
  We continue with the weighted formulas $\phi$.
  The case $\phi = \beta$ is already covered.
  For $\phi = \sre$, we construct an SRTM which
  accepts in a single transition with weight $\sre$ to a final state, i.e., transitions to a state without outgoing transitions.
  \begin{itemize}
    \item For \(\phi = W(x_1, \dotsc, x_l)\),
          we construct an SRTM which moves the head to the position
          where \(W(x_1, \dotsc, x_m)\) is stored on the input tape
          using transitions of weight \(\sone\)
          and then applies a transition
          \((q,\mathcal{X},q_f,\mathcal{X},1,\mathcal{X})\),
          where \(q\) is the state reached after moving to the correct tape cell, into a state \(q_f\) without outgoing transitions.
    \item For $\phi = \psi \fplus \zeta$,
          we let $\mac_\psi$ and $\mac_\zeta$ be
          the SRTMs for $\psi$ and $\zeta$, respectively.
          We construct an SRTM $\mac$ for $\phi$ which
          non-deterministically simulates either $\mac_\psi$ or $\mac_\zeta$
          on the input.
          The simulations are started by a single transition of weight $\sone$ from the new initial state
          into the initial state of either $\mac_\psi$ or $\mac_\zeta$.
          Thus, every run of either $\mac_\psi$ or $\mac_\zeta$
          on the input is simulated by exactly one run of $\mac$.
    \item For $\phi = \psi \fmal \zeta$,
          we let $\mac_\psi$ and $\mac_\zeta$ be
          the SRTMs for $\psi$ and $\zeta$, respectively.
          We construct an SRTM $\mac$ for $\phi$ which
          first simulates $\mac_\psi$ and then $\mac_\zeta$ on the input.
          All transitions outside of the simulations,
          e.g., preparing the input for a simulation and clearing the tape for the second simulation,
          have weight $\sone$ and are deterministic.
          Thus, every combination of a run $r_\psi$ of $\mac_\psi$
          and a run $r_\zeta$ of $\mac_\zeta$ on the input
          is simulated by exactly one run of $\mac$
          and the weight of this run is the product of the
          weights of $r_\psi$ and $r_\zeta$.
          As multiplication distributes over addition,
          $\mac$ recognizes $\phi$.
    \item If $\phi = \bigfplus x.\psi$,
          we let $\mac_\psi$ be
          the SRTM for $\psi$.
          We construct an SRTM $\mac$ for $\phi$ which non-deterministically
          guesses an element $a \in A$ and simulates
          $\mac_\psi$ on the input $\tuple{\mathfrak{A}, a, a_1, \dotsc, a_l}$.
          We ensure that for every $a \in A$, there is exactly one run of $\mac$
          which guesses $a$ and that all transitions which prepare the simulation
          have weight $\sone$.
          Thus, for every choice of $a \in A$,
          every run of $\mac_\psi$ on $\tuple{\mathfrak{A}, a, a_1, \dotsc, a_l}$
          is simulated by exactly one run of $\mac$.
    \item If $\phi = \bigfmal x.\psi$,
          we let $\mac_\psi$ be
          the SRTM for $\psi$.
          We construct an SRTM for $\phi$ which iterates
          over all elements $a \in A$ and simulates
          $\mac_\psi$ on the input $\tuple{\mathfrak{A}, a, a_1, \dotsc, a_l}$.
          All transitions outside of the simulations,
          e.g., preparing the input for a simulation and clearing the tape for the next simulation,
          have weight $\sone$ and are deterministic.
          Thus, every combination of runs $r_0, \dotsc, r_{n-1}$ of $\mac_\psi$
          on $\tuple{\mathfrak{A}, 0, a_1, \dotsc, a_l}$, $\dotsc$,  $\tuple{\mathfrak{A}, n-1, a_1, \dotsc, a_l}$, respectively,
          is simulated by exactly one run of $\mac$
          and the weight of this run is the product of the
          weights of $r_0, \dotsc, r_{n-1}$.
          As multiplication distributes over addition,
          $\mac$ recognizes $\phi$.
    \item If $\phi = \bigoplus X.\psi$,
          we proceed like in the case of $\bigoplus x.\psi$
          but guess a relation $R$ of appropriate arity for $X$
          instead of an element $a \in A$.
  \end{itemize}

Now, for (2) from Definition~\ref{ch}, let
$M=(R', Q, \Sigma_{I}, \Sigma, \iota, \mathcal{X}, \sqcup, \delta)$
be an SRTM whose runs are bounded by time $n^k$ where $n$ is the length of the input (the encoding of a structure from  $\str{\tau}_<$ and $k$ is bigger than the maximum of the arities of the relational symbols in $\tau$). We will construct a $\mathrm{wESO}$-formula $\phi$ such that for ordered structures we have  $\|\phi\|= \|\mac\|$. Let $Q=\{q_0, \dots, q_{m-1}\}$.
We assume the same encoding for $\mathfrak{A}\in \str{\tau}_<$ as before
such that the encoding consists of symbols \(0\) and \(1\) and elements from \(\Sr\).
Thus, we may choose the tape alphabet $\Sigma$
in our SRTM to be $\{0,1,\mathcal{X}, \sqcup\}$.

The first task is to build a Boolean second-order formula $$\psi(T_0, T_1, T_2, T_3, H_{q_0}, \dots, H_{q_{m-1}}),$$  without second-order quantifiers but where $T_0, T_1, T_2, T_3, H_{q_0}, \dots, H_{q_{m-1}}$ are new second-order variables, such that there is a one-to-one correspondence between the computation paths for the input $\text{enc}(\mathfrak{A})$ and the expansions of the model $\mathfrak{A}$ that satisfy  $\psi(T_0, T_1, T_2, T_3, H_{q_0}, \dots, H_{q_{m-1}})$.

Recall that $\mathfrak{A}$ with $|A|=n$ is linearly ordered by $<$. We will represent the $n^k$ time steps and tape cells as the elements of the set $A^k$, so $k$-tuples from $A$. From $<$ we can define in first-order logic an associated successor relation $\mathtt{Succ}$, as well as the bottom $\bot$ and top element $\top$. With this at hand, if $\overline{x}, \overline{y}$ are $k$-tuples of variables we can define a successor relation on the elements of $A^k$ by 
$$\textstyle \overline{x} = \overline{y} +1 := \bigvee_{i\leq k} (\bigwedge_{j<i} (x_j=\bot \wedge y_j=\top) \wedge \mathtt{Succ}(x_i,y_i) \wedge \bigwedge_{j>i} x_j=y_j).$$

Next, let us spell out the meaning of the predicates $T_0, T_1, T_2, T_3, H_{q_0}, \dots, H_{q_{m-1}}$:

\begin{itemize}
\item[(1)] $T_i (\overline{p}, \overline{t})$ $(i=0,1)$, where  $\overline{p}, \overline{t}$ are $k$-tuples of first-order variables, is meant to represent that at time $\overline{t}$ the position $\overline{p}$ of the tape contains the symbol $i$. $T_2$ does the same but for \(\mathcal{X}\) and \(T_3\) for the blank symbol.

\item[(2)] $H_{q_i} (\overline{p}, \overline{t})$ $(0\leq i \leq m-1)$   represents that at time $\overline{t}$ the machine is in state $q_i$ with its head in position $\overline{p}$.
\end{itemize}

The idea is that we will use the predicates  $T_i$ and $H_q$ to describe an accepting computation of $\mac$ started with input $\text{enc}(\mathfrak{A})$.
For simplicity, we would like to assume that all computations on a structure of size $n$ have length $n^k$,
i.e., that there are no shorter computations.
In order to do this,
we modify the SRTM by adding transitions which ``do nothing'' from all accepting states, i.e.,
all states without outgoing transitions.
The resulting SRTM has the same behavior as our given SRTM when only taking into account
computations of exactly length $n^k$, which is what the formula we construct will do.

Given $k$-tuples of variables $\overline{x}=x_1, \dots, x_k$ and $\overline{y}=y_1, \dots, y_k$, we write $\overline{x}\neq \overline{y}$ as an abbreviation for $\bigvee_{1\leq i\leq k}x_i\neq y_i$.
We let $\psi(T_0, T_1, T_2, T_3, H_{q_0}, \dots, H_{q_{m-1}})$ be the conjunction of the following:
\begin{itemize}
\itemsep=4pt
\item 
\quad $\forall\overline{p}\forall\overline{t} \bigvee_{i=0}^3 \Big( T_i(\overline{p}, \overline{t}) \land \lnot \bigvee_{j \neq i} T_j(\overline{p}, \overline{t}) \Big)$

\smallskip

``In every configuration each cell of the tape contains exactly one symbol from the alphabet $\Sigma$.'' 

\item \quad  $\forall\overline{t}\exists ! \overline{p} (\bigvee_{q \in Q} H_q (\overline{p}, \overline{t})) \wedge \forall \overline{p}\forall\overline{t} (\bigwedge_{q\neq q'\in Q} (\neg H_q(\overline{p}, \overline{t}) \vee \neg H_{q'}(\overline{p}, \overline{t}))) $
\smallskip

``At any time the head is at exactly one position and the machine $\mac$ is in exactly one state.'' 




\item\quad $\forall \overline{t} \bigvee_{\substack{(p,a,q,b,D,v) \in \delta}} \theta_{(p,a,q,b,D,v)}(\overline{t})
\lor \overline{t} = (\top, \dotsc, \top)$, where\\
 $\mbox{\quad} \theta_{(p,a,q,b,1,v)}(\overline{t}) :=
\exists \overline{p} \exists \overline{q} \exists \overline{s}
\Big( (\overline{p} = \overline{q} + 1) \land (\overline{s} = \overline{t} + 1 ) \land H_p(\overline{p},\overline{t}) \land T_a(\overline{p},\overline{t}) \land H_q(\overline{q}, \overline{s}) \\
 \phantom{\mbox{\quad} \theta_{(p,a,q,b,1,v)}(\overline{t}) :=
\exists \overline{p} \exists \overline{q} \exists \overline{s}
\Big( } \land T_b(\overline{p}, \overline{s})
\land \forall \overline{p}' (\overline{p}' \neq \overline{p} \to \bigwedge_{i=0}^3 (T_i(\overline{p}', \overline{t}) \leftrightarrow T_i(\overline{p}', \overline{s})))\, \Big)$,\\
$\mbox{\quad} \theta_{(p,a,q,b,-1,v)}(\overline{t}) :=
\exists \overline{p} \exists \overline{q} \exists \overline{s}
\Big( (\overline{p} = \overline{q} - 1) \land (\overline{s} = \overline{t} + 1 ) \land H_p(\overline{p},\overline{t}) \land T_a(\overline{p},\overline{t}) \land H_q(\overline{q}, \overline{s}) \\
\phantom{\mbox{\quad} \theta_{(p,a,q,b,-1,v)}(\overline{t}) :=
\exists \overline{p} \exists \overline{q} \exists \overline{s}
\Big( } \land T_b(\overline{p}, \overline{s})
\land \forall \overline{p}' (\overline{p}' \neq \overline{p} \to \bigwedge_{i=0}^3 (T_i(\overline{p}', \overline{t}) \leftrightarrow T_i(\overline{p}', \overline{s})))\Big).$

\smallskip 

``The configurations respect the transitions in $\delta$.'' Here, we abuse notation and identify \(2\) with \(\mathcal{X}\)
and \(3\) with \(\sqcup\)

\item 
\quad $H_{\iota}( \underbrace{\bot \cdot\dots \cdot \bot}_{k\text{-times}},  \underbrace{\bot \cdot\dots \cdot \bot}_{k\text{-times}}) \land
\forall \overline{p}.(\overline{p} < \bot^k + \top \to T_0(\overline{p})) \land
\forall \overline{p}.(\overline{p} = \bot^k + \top \to T_1(\overline{p})) \land\\
\mbox{\quad} \bigwedge_{i = 1}^{\ell_1} \forall x_1 \ldots \forall x_{r_i}. R_i(\overline{x}) \to T_1(\bot^k + \top + \sum_{j = 1}^{i-1} \top^{r_j} + \bar{x}, \bot^k) \land\\
\mbox{\quad} \bigwedge_{i = 1}^{\ell_1} \forall x_1 \ldots \forall x_{r_i}. \lnot R_i(\overline{x}) \to T_0(\bot^k + \top + \sum_{j = 1}^{i-1} \top^{r_j} + \bar{x}, \bot^k) \land\\
\mbox{\quad} \forall \overline{p} . (\bot^k + \top + \sum_{j = 1}^{\ell_1} \top^{r_j}) \leq \overline{p} \land \overline{p} < (\bot^k + \top + \sum_{j = 1}^{\ell_1} \top^{r_j} + \sum_{j = 1}^{\ell_2} \top^{w_j}) \to T_2(\overline{p}, \bot^k) \land\\
\mbox{\quad} \forall \overline{p} . (\bot^k + \top + \sum_{j = 1}^{\ell_1} \top^{r_j} + \sum_{j = 1}^{\ell_2} \top^{w_j}) \leq \overline{p} \to T_3(\overline{p}, \bot^k)$,

\smallskip

where $r_i$ is the arity of $R_i$ and $w_i$ the arity of $W_i$.

\smallskip

``At the initial time the tape contains $\text{enc}(\mathfrak{A})$ and it is in the initial state $\iota$.'' 

Note that we may define an addition of tuples using formulas like
\begin{align*}
  (\overline{q} = \overline{p} + \overline{x}) = \exists X \big(& \forall \overline{p}' ( \overline{p} \leq \overline{p}' \land \overline{p}' < \overline{q} \to \exists \overline{x}'. \overline{x}' \leq \overline{x} \land X(\overline{p}', \overline{x}') ) \land\\
  & \forall \overline{x}'. \overline{x}' \leq \overline{x} \to \exists \overline{p}' ( \overline{p} \leq \overline{p}' \land \overline{p}' < \overline{q} \land X(\overline{p}', \overline{x}') ) \land\\
 &  \forall \overline{p}' \forall \overline{p}'' \forall \overline{x}' \forall \overline{x}'' (X(\overline{p}', \overline{x}') \land X(\overline{p}'', \overline{x}'') \to (\overline{p}' = \overline{p}'' \leftrightarrow \overline{x}' = \overline{x}''))\big),
\end{align*}
where \(\overline{x}\) and \(\overline{x}'\) are tuples of length \(r\)
and \(X\) is of arity \(k+r\).
The formula above ensures the existence of a bijection (line 3) \(X\) from all positions between \(\overline{p}\) and \(\overline{q}\) (exclusive, line 1) onto all positions between \(\bot^r\) and \(\overline{x}\) (inclusive, line 2).

\end{itemize}

Finally, to obtain an $\mathrm{wESO}$-formula $\phi$ such that  $\| \phi\|(\mfa) = \|\mac\|(\text{enc}(\mfa))$, we consider the formula $\chi(\overline{t})$ which computes the weight of the transition applied at time step \(\overline{t}\) in the run encoded by $T_0, T_1, T_2, H_{q_0},\dots, H_{q_{m-1}}$:
\begin{align*}
  \chi(\overline{t}) &= (\overline{t} = \top^k) \fplus \chi_1(\overline{t}) \fplus \chi_2(\overline{t}),\\
  \chi_1(\overline{t}) &=  \bigfplus_{\substack{(p,a,q,b,D,\sre) \in \delta\\\sre\in\Sr}} \sre \fmal \theta_{(p,a,q,b,D,\sre)}(\overline{t}),\\
  \chi_2(\overline{t}) &= \bigfplus_{(p,a,q,b,D,\mathcal{X}) \in \delta} \chi_2'(\overline{t}) \fmal \theta_{(p,a,q,b,D,\mathcal{X})}(\overline{t}),\\
  \chi_2'(\overline{t}) &= \bigoplus \overline{p} . H(\overline{p}, \overline{t}) \otimes
    \bigoplus_{\substack{i = 1}}^{\ell_2} \bigoplus x_1 \dotsb \bigoplus x_{w_i} . W_i(\overline{x}) \otimes \beta_{W_i}(\overline{p}, \overline{x}),
\end{align*}
    where \(H(\overline{p}, \overline{t})\) is true iff the head is at position \(\overline{p}\) at time \(\overline{t}\), i.e., \(H(\overline{p}, \overline{t}) = \bigvee_{q \in Q} H_q(\overline{p}, \overline{t})\),
    and \(\beta_{W_i}(\overline{p}, \overline{x})\) is true iff
    the weight of \(W_i(\overline{x})\) is stored at position \(\overline{p}\) in the encoding \(\mathrm{enc}(\mathfrak{A})\),
    i.e., \(\beta_{W_i}(\overline{p}, \overline{x}) = (\overline{p} = \bot^k + \top + \sum_{j = 1}^{\ell_1} \top^{r_j} + \sum_{j = 1}^{i-1} \top^{w_j} + \overline{x})\).
Finally, we define $\phi$ as
\[\bigoplus T_0 \bigoplus T_1 \bigoplus T_2 \bigoplus T_3 \bigoplus H_{q_0} \ldots \bigoplus H_{q_{m-1}} \psi(T_0, T_1, T_2, T_3, H_{q_0}, \dots, H_{q_{m-1}})\fmal
\bigfmal \overline{t}. \chi(\overline{t}).\]

If  $\mathcal{R}$  is commutative and idempotent, we argue for arbitrary finite structures as described in Remark 13, using that the value  $\| \phi\|(\mfa) = \|\mac\|(\text{enc}(\mfa))$\ does not depend on the particular order of the structure $\mfa$.
\end{proof}

\setcounter{theorem}{26}
\begin{proposition}
\label{lemma:old_weaker}
For the semiring $\mathcal{R} = \mathbb{B}$, the function 
\begin{align*}
    f\colon& (\Sigma \cup R)^* \longrightarrow R, 
    x \mapsto \left\{ \begin{array}{cr}
        s_1 \fmal \dots \fmal s_n & \text{ if } x  = \sigma_1 \dotsb \sigma_n s_1 \dotsb s_n,\text{ where } \sigma_i \in \Sigma, s_i \in R, \\
        e_{\oplus} & \text{ otherwise.}
    \end{array}\right.
\end{align*}
cannot be computed by an SRTM according to the definitions of~\cite{DBLP:journals/jair/EiterK23}.
\end{proposition}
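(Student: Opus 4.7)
My plan is to pick the polynomial semiring $\mathcal{R} = \mathbb{N}[x_1, x_2, \ldots]$ in countably many indeterminates and show that no old-style SRTM can compute $f$ on inputs of the form $a^n\, x_1 x_2 \cdots x_n$. Suppose for contradiction that $M = (\mathcal{R}, R', Q, \Sigma, \iota, \sqcup, \delta)$ computes $f$ in the sense of~\cite{DBLP:journals/jair/EiterK23}. The two restrictions of that model I would exploit are conditions (1) and (3): semiring-valued tape cells are never overwritten during a computation, and the transitions from a state reading a semiring cell must treat every semiring value uniformly, either always using the value under the head as the weight or always using one fixed weight from $R'$.

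The first technical step is to observe that, by the uniformity condition, the ``shape'' of any computation path $\pi$ on $a^n x_1 \cdots x_n$ -- the sequence of states, head positions, directions, and alphabet-side writes -- is entirely independent of the specific $x_i$'s on the tape. Consequently the weight of $\pi$ factorises as $c(\pi)\cdot \prod_{i=1}^n x_i^{k_i(\pi)}$, where $c(\pi)$ is a product of $R'$-constants and $k_i(\pi)$ counts the number of $\mathcal{X}$-weighted transitions $\pi$ applies while the head is on cell $n+i-1$. Summing over all paths and equating with $x_1 x_2 \cdots x_n$ in $\mathbb{N}[x_1, x_2, \ldots]$, the absence of additive cancellation in $\mathbb{N}$ forces the multilinear monomial $x_1 \cdots x_n$ to be produced only by paths $\pi^*$ with $k_i(\pi^*) = 1$ for every $i$. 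The second step decomposes such a $\pi^*$ into maximal \emph{semiring excursions} (maximal subpaths during which the head remains in positions $n, \dots, 2n-1$); inside any excursion the tape is fixed and, by uniformity again, transitions depend only on the $(q, p)$ virtual-state. A pigeonhole on the $|Q|\cdot n$-sized virtual-state space bounds the length and structural variety of excursions, and for $n$ taken sufficiently large compared with $|Q|$, $|R'|$ and $|\Sigma|$, a further pigeonhole produces two distinct indices $i \ne j$ whose $\mathcal{X}$-visits along $\pi^*$ occur inside excursions of identical entry/exit signature. Surgically swapping the segment of $\pi^*$ between these two visits yields another valid computation path $\pi'$ with $k_i(\pi') \ne 1$ or $k_j(\pi') \ne 1$, contributing a non-multilinear monomial with strictly positive $\mathbb{N}$-coefficient, contradicting the required identity.

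The main obstacle I anticipate is transporting the local, per-excursion symmetry enforced by condition (3) into a global symmetry that survives all the alphabet-side bookkeeping the machine may perform between successive boundary crossings. Since the alphabet region is modifiable and effectively unbounded in $n$, the machine can use it as a counter and distinguish different semiring positions indirectly; ruling this out requires a joint pigeonhole on the excursion types and on a finite-state abstraction of the alphabet-side dynamics, and demands care in choosing $n$ so that this combined abstraction repeats at two points inducing compatible semiring-side signatures. The choice of the polynomial semiring $\mathbb{N}[x_1, x_2, \ldots]$ rather than an integer or lattice variant is deliberate: the lack of additive cancellation means any surgical creation of a spurious monomial is immediately fatal, which is what closes the argument.
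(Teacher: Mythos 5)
There is a genuine gap, and it is fatal to the chosen setup. By restricting attention to \emph{well-formed} inputs $a^n x_1\cdots x_n$ you have thrown away the only source of hardness. In the old model the single-state machine that sweeps right, crossing each alphabet cell with weight $e_\otimes$ and each semiring cell with the value under the head (legal by condition 3(a) of the old definition), and halts at the first blank, outputs $x_1\otimes\cdots\otimes x_m$ on \emph{every} input $a^n x_1\cdots x_m$ --- in particular it agrees with $f$ on all inputs of your restricted form. Hence the behaviour of a hypothetical $M$ computing $f$ on well-formed inputs is perfectly consistent (it could coincide there with this trivial machine, which has a unique run with $k_i(\pi)=1$ for all $i$ and nothing to swap), and no contradiction can be derived from those inputs alone. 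What makes $f$ uncomputable in the old model is the obligation to return $e_\oplus$ when the block lengths \emph{disagree}; any correct proof must compare the machine's behaviour across inputs with $m=n$ and $m\neq n$. This is exactly what the paper does: it fixes $\mathcal{R}=\mathbb{B}$, sets all semiring values to $1$ so the problem degenerates to deciding $n=m$, observes that the semiring suffix is read-only and value-indistinguishable so the machine's crossing behaviour there is that of a finite-state device on a unary string, and pumps $m$.

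Your surgery step is also unsound independently of the above. Swapping or duplicating a segment of a single Turing-machine run does not produce a valid run unless the \emph{entire} configuration --- head position and the full contents of the writable alphabet region --- matches at the splice points. That region has $|\Sigma|^n$ possible contents, so no pigeonhole over a ``finite-state abstraction of the alphabet-side dynamics'' is available; you flag this obstacle yourself but the proposal does not overcome it, and on well-formed inputs it cannot be overcome for the reason above. (The idea of using $\mathbb{N}[x_1,x_2,\ldots]$ to track, via exponent vectors, how often each semiring cell is charged is attractive and the no-cancellation observation is correct, but it would have to be deployed against inputs with mismatched block lengths, essentially reproducing the paper's pumping-on-$m$ argument.)
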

\begin{proof}[Proof (sketch)]
We restrict ourselves to inputs of the form $x  = \sigma_1 \dotsb \sigma_n s_1 \dotsb s_m$, where $s_i = 1$. So essentially, $f(x) = 1$ iff $n = m$ and we have a decision problem.  

We proceed by proof by contradiction. So we fix an SRTM $M$ that computes $f$. Note that since semiring values are observe-only (meaning we can only distinguish whether there is a semiring value or not) for the SRTM model  in \cite{DBLP:journals/jair/EiterK23}, and we cannot overwrite them, $M$ acts on the tape positions $n + 1$ to $n + m$ like a finite state automaton. Thus, we can apply the pumping lemma to derive that there is a $k$ such that if $n > k$ then there exists more than one $m$ such that $M$ accepts.

    
\end{proof}

\begin{proposition}
\label{proposition:old_stronger}
For the semiring $\mathcal{R} = \mathbb{N}$  and $r = 1$, the function 
\[
\text{dis}_r\colon (\Sigma \cup R)^* \longrightarrow R,\quad  x \mapsto \left\{ \begin{array}{cc}
        r & \text{ if } x_0 = r, \\
        x_0 \oplus r & \text{ if } x_0 \in R, x_0 \neq r\\
        e_{\oplus} & \text{ otherwise.}
    \end{array}\right.,
\]
cannot be computed by an SRTM according to the definitions of this paper.
\end{proposition}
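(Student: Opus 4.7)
The plan is to fix the semiring $\mathcal{R} = \mathbb{N}$ and $r = 1$, and argue by contradiction: we suppose that some terminating SRTM $M = (R', Q, \Sigma_I, \Sigma, \iota, \mathcal{X}, \sqcup, \delta)$ computes $\text{rec}_1$, and derive a contradiction. We restrict attention to inputs of the form $x = s$ consisting of a single semiring value $s \in \mathbb{N}$, so that by \cref{def:initial} the initial alphabet content of the tape is $\mathcal{X}\sqcup\sqcup\cdots$ and its semiring content is $(s, 0, 0, \ldots)$.

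The central observation will be input-independence of the computation tree. Since $\delta$ dispatches only on pairs $(q,\sigma)$ with $\sigma \in \Sigma$ an alphabet letter, and since, by \cref{srtm}, semiring values on the tape are frozen by the input and are never rewritten, the alphabet content of the tape and the head trajectory after any prefix of transitions depend only on the positions at which semiring values appear in the input, not on their actual values. Hence the branching structure of the computation tree of $M$, together with the label of each transition used, is literally the same for every single-value input $s$; only the weights of $\mathcal{X}$-transitions vary.

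From here we extract the polynomial structure of $\|M\|(s)$. Along a fixed path $\pi$, each transition contributes either a fixed weight in $R'$ or, if its label is $\mathcal{X}$, the semiring value under the head at that step. That value equals $s$ at position $0$ and equals $e_\oplus = 0$ elsewhere, so annihilation kills any path that fires an $\mathcal{X}$-transition away from position $0$. Each surviving path then has weight $c_\pi \cdot s^{k_\pi}$ by commutativity, where $c_\pi$ is a product of elements of $R'$ and $k_\pi \geq 0$; summing over the finitely many paths (the machine is terminating) yields $\|M\|(s) = p(s)$ for a single polynomial $p \in \mathbb{N}[s]$ that does not depend on $s$.

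The contradiction is then purely algebraic. From $\text{rec}_1(s) = s + 1$ for every $s \neq 1$, we obtain $p(s) = s + 1$ on the infinite set $\mathbb{N} \setminus \{1\}$, so the difference $p(s) - (s+1) \in \mathbb{Z}[s]$ has infinitely many roots and is identically zero; therefore $p(s) = s + 1$ in $\mathbb{N}[s]$, and $\|M\|(1) = p(1) = 2 \neq 1 = \text{rec}_1(1)$. The most delicate step in this plan is isolating and formally justifying input-independence of the computation tree: we must argue precisely that transitions cannot branch on semiring values, that semiring values on the tape are never modified, and that the head position at each step along a given transition sequence is determined solely by the initial alphabet portion of the tape. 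Once this is cleanly formulated, the algebraic half of the argument reduces to the uniqueness of polynomial functions on infinite subsets of $\mathbb{Z}$ and is routine.
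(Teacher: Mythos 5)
Your proof is correct and follows essentially the same route as the paper's: fix $\mathcal{R}=\mathbb{N}$, $r=1$, observe that on single-value inputs the computation tree and transition labels are input-independent so that $\|M\|(s)$ is a fixed polynomial $p\in\mathbb{N}[s]$, and derive a contradiction from the polynomial identity. Your algebraic finish (agreement with $s+1$ on infinitely many points forces $p(s)=s+1$, hence $p(1)=2\neq 1$) is a minor, and in fact cleaner, variant of the paper's evaluation at the specific points $0,1,2$, and your explicit justification of input-independence and of the annihilation of paths firing $\mathcal{X}$-transitions off position $0$ fills in details the paper only asserts.
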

\begin{proof}[Proof]

We proceed by proof by contradiction. So suppose there is an SRTM $M$ that computes $\text{dis}_{r}$. If we give $M$ only inputs of the form $n$, where $n \in \mathbb{N}$, then we know that $M$ always takes the same (non-deterministic) transitions, with the same weights $r \in R' \cup \{n\}$, where $R' \subset \mathbb{N}, |R'| < \infty$. 

It follows that there is a polynomial $p$ over $\mathbb{N}$ of the form 
$p(n) = \sum_{i=0}^k c_i{\cdot}n^i$ (where $c_i \in \mathbb{N}$, for all $i\geq0$) such that for these inputs, $\lVert M \rVert = p$. As $p(2) = 3$, it follows that  $c_i = 0$ for all $i \geq 2$, i.e., $p(n) = c_1{\cdot}n+c_0$. Furthermore, as $p(0) = 0$, it follows that $c_0=0$, and as $p(1) = 1$ that $c_1=1$. Hence $p(n) = n$, which contradicts the assertion that $p(2)=3$.
\end{proof}

\setcounter{theorem}{17}
\begin{theorem}[\NPoldinf{$\mathcal{R}$} versus \NPnewinf{$\mathcal{R}$}]\label{comparison}
For any commutative semiring $\mathcal{R}$, 
\[
\karp(\NPoldinf{\mathcal{R}}) = \NPnewinf{\mathcal{R}}^{\limrec}.
\]
\end{theorem}
\begin{proof}[Proof (sketch)]($\supseteq$)
For the $\supseteq$ inclusion, first note that \NPnewinf{$\mathcal{R}$} is closed under Karp s-reductions. This follows from Lemma 17, which is indeed sound for the new model. The same holds for \NPnewinf{$\mathcal{R}$}$^{\limrec}$. Thus, it suffices to show that \NPoldinf{$\mathcal{R}$} is contained in \NPnewinf{$\mathcal{R}$}$^{\limrec}$. 

For this, we show how to transform a transition function 
\[
\delta_{old} \subseteq \left(Q \times (\Sigma \cup R) \right)\times \left(Q \times (\Sigma \cup R)\right) \times \{-1,1\} \times R
\]
into an equivalent transition function 
\[
\delta_{new} \subseteq \left(Q \times \Sigma \right)\times \left(Q \times \Sigma\right) \times \{-1,1\} \times (R' \cup \{\mathcal{X}\} \cup \limrec).
\]

Due to the constraints on $\delta_{old}$, we can split $\delta_{old}$ into three types of transitions:
\begin{align*}
\delta_{1} &\subseteq \left(Q \times \Sigma \right)\times \left(Q \times \Sigma\right) \times \{-1,1\} \times R',\\
\delta_{2} &\subseteq Q \times Q \times \{-1,1\} \times R',\\
\delta_{3} &\subseteq Q \times Q \times \{-1,1\},
\end{align*}
where $\delta_{1}$ contains the transitions that we take, when there is no semiring value on the tape, and $\delta_{2}$ and $\delta_{3}$ contain the transitions, when there is a semiring value on the tape and we transition with a fixed value and the value under the head, respectively.

We construct $\delta_{new}$ as 
\begin{align*}
    \delta_{new} =& \phantom{\cup} \delta_{1}\\
                  & \cup \{(q,\mathcal{X}, q', \mathcal{X}, d, r) \mid (q, q', d, r) \in \delta_{2} \text{ and } (q, q', d) \not\in \delta_{3}\}\\
                  & \cup \{(q,\mathcal{X}, q', \mathcal{X}, d, \text{dis}_{R^*}) \mid (q, q', d) \in \delta_{3} \text{ and } R^* = \{r' \mid (q, q', d, r') \in \delta_{2}\}\}.
\end{align*}
Note that this transition function writes $\mathcal{X}$ if and only if it reads $\mathcal{X}$. Therefore, the configurations of the old and new model can be interchanged in the straightforward manner. That is, an old configuration $(q, s, n) \in Q \times (R\cup\Sigma)^{N} \times \mathbb{N}$ corresponds to the new configuration $(q, x(s), n) \in Q \times (\Sigma \times R)^{N} \times \mathbb{N}$, where $x(s)$ is as in \cref{def:initial} 
extended to infinite strings. Since our new configurations have $x_i^{\Sigma} = \mathcal{X}$ iff $x_i^{R}$ contains a semiring value that was contained in the input at that position, there is exactly one old configuration for every possible new configuration.

Let $(q, x, n)$ be a configuration of the old machine such that the current value under the tape is $\sigma^* \in \Sigma_{I}$. Then, 
\begin{align*}
\mathrm{val}(q, x, n) &= \bigoplus_{(q, x, n) \stackrel{r}{\rightarrow} (q', x', n')} r \otimes \mathrm{val}(q', x', n')\\
&= \bigoplus_{(q, \sigma^*, q', \sigma', d, r) \in \delta_1} r \otimes \mathrm{val}(q', x[n \mapsto \sigma'], n + d)\\
&= \bigoplus_{(q, \sigma^*, q', \sigma', d, r) \in \delta_{new}} r \otimes \mathrm{val}(q', x[n \mapsto \sigma'], n + d),
\end{align*}
where $x[n \mapsto \sigma']$ denotes the tape string, where $x[n \mapsto \sigma']_i = x_i$, if $i \neq n$ and $x[n \mapsto \sigma']_n = \sigma'$

Let $(q, x, n)$ be a configuration of the old machine such that the current value under the tape is $r^* \in R$. Then,
\begin{align*}
\mathrm{val}(q, x, n) &= \bigoplus_{(q, x, n) \stackrel{r}{\rightarrow} (q', x', n')} r \otimes \mathrm{val}(q', x', n')\\
&= \phantom{\oplus} \bigoplus_{(q, r^*, q', r^* d, r) \in \delta_{old}} r \otimes \mathrm{val}(q', x, n + d)\\
&= \phantom{\oplus} \bigoplus_{(q, q', d, r) \in \delta_2, \text{ s.t. } r \neq r^* \text{ or }(q,q',d) \not\in \delta_3} r \otimes \mathrm{val}(q', x, n + d)\\
&\phantom{\;=\;} \oplus \bigoplus_{(q, q', d) \in \delta_3} r^* \otimes \mathrm{val}(q', x, n + d)\\
&= \phantom{\oplus} \bigoplus_{(q, q', d, r) \in \delta_2, \text{ s.t. } (q,q',d) \not\in \delta_3} r \otimes \mathrm{val}(q', x, n + d)\\
&\phantom{\;=\;} \oplus \bigoplus_{(q, q', d, r) \in \delta_2, \text{ s.t. } (q,q',d) \in \delta_3 \text{ and } r \neq r^*} r \otimes \mathrm{val}(q', x, n + d)\\
&\phantom{\;=\;} \oplus \bigoplus_{(q, q', d) \in \delta_3} r^* \otimes \mathrm{val}(q', x, n + d)\\
&= \bigoplus_{q' \in \Sigma, d \in \{-1, 1\}}\left(\begin{array}{l}
     \phantom{\oplus} \bigoplus_{(q, q', d, r) \in \delta_2, \text{ s.t. } (q,q',d) \not\in \delta_3} r\\
\oplus \bigoplus_{(q, q', d, r) \in \delta_2, \text{ s.t. } (q,q',d) \in \delta_3 \text{ and } r \neq r^*} r\\
\oplus \bigoplus_{(q, q', d) \in \delta_3} r^* 
\end{array}\right)\otimes \mathrm{val}(q', x, n + d)\\
&= \bigoplus_{q' \in \Sigma, d \in \{-1, 1\}}\left(\begin{array}{lc}
\bigoplus_{(q, q', d, r) \in \delta_2} r & \text{if } (q,q',d) \not\in \delta_3\\
r^* \oplus \bigoplus_{(q, q', d, r) \in \delta_2, \text{ s.t. } r \neq r^*} r & \text{if } (q,q',d) \in \delta_3 
\end{array}\right)\otimes \mathrm{val}(q', x, n + d)
\end{align*}
Now let 
\[
R^*(q', d) = \{r' \mid (q, q', d) \in \delta_{3} \text{ and } (q, q', d, r') \in \delta_{2}\}.
\]
Then,
\begin{align*}
val&(q, x, n) = \bigoplus_{q' \in \Sigma, d \in \{-1, 1\}}\left(\begin{array}{lc}
\bigoplus_{(q, q', d, r) \in \delta_2} r & \text{if } (q,q',d) \not\in \delta_3\\
r^* \oplus \bigoplus_{(q, q', d, r) \in \delta_2, \text{ s.t. } r \neq r^*} r & \text{if } (q,q',d) \in \delta_3 
\end{array}\right)\otimes \mathrm{val}(q', x, n + d) \\
&= \bigoplus_{q' \in \Sigma, d \in \{-1, 1\}}\left(\begin{array}{lc}
\bigoplus_{(q, q', d, r) \in \delta_2} r & \text{if } (q,q',d) \not\in \delta_3\\
r^* \oplus \bigoplus_{r \in R^*(q',d), \text{ s.t. } r \neq r^*} r & \text{if } (q,q',d) \in \delta_3 
\end{array}\right)\otimes \mathrm{val}(q', x, n + d) \\
&= \bigoplus_{q' \in \Sigma, d \in \{-1, 1\}}\left(\begin{array}{lc}
\bigoplus_{(q, q', d, r) \in \delta_2} r & \text{if } (q,q',d) \not\in \delta_3\\
\rec{R^*(q',d)}(r^*) & \text{if } (q,q',d) \in \delta_3 
\end{array}\right)\otimes \mathrm{val}(q', x, n + d) \\
&= \bigoplus_{q' \in \Sigma, d \in \{-1, 1\}}\left(\begin{array}{lc}
\bigoplus_{(q, \mathcal{X}, q', \mathcal{X,} d, r) \in \delta_{new}} r & \text{if } (q,q',d) \not\in \delta_3\\
\bigoplus_{(q, \mathcal{X}, q', \mathcal{X}, d, \rec{R^*}) \in \delta_{new}} \rec{R^*}(r^*) & \text{if } (q,q',d) \in \delta_3 
\end{array}\right)\otimes \mathrm{val}(q', x, n + d) \\
&= \bigoplus_{q' \in \Sigma, d \in \{-1, 1\}}\left(
\bigoplus_{(q, \mathcal{X}, q', \mathcal{X}, d, r) \in \delta_{new}} r \oplus
\bigoplus_{(q, \mathcal{X}, q', \mathcal{X}, d, \rec{R^*}) \in \delta_{new}} \rec{R^*}(r^*) 
\right)\otimes \mathrm{val}(q', x, n + d) \\
&= 
\bigoplus_{(q, \mathcal{X}, q', \mathcal{X}, d, r) \in \delta_{new}} r 
\otimes \mathrm{val}(q', x, n + d) \oplus
\bigoplus_{(q, \mathcal{X}, q', \mathcal{X}, d, \rec{R^*}) \in \delta_{new}} \rec{R^*}(r^*)
\otimes \mathrm{val}(q', x, n + d) \\
&= 
\bigoplus_{(q, \mathcal{X}, q', \mathcal{X}, d, r) \in \delta_{new}} r 
\otimes \mathrm{val}(q', x, n + d) \oplus
\bigoplus_{(q, \mathcal{X}, q', \mathcal{X}, d, \rec{R^*}) \in \delta_{new}} \rec{R^*}(r^*)
\otimes \mathrm{val}(q', x, n + d) \\
&= \bigoplus_{\tau \in \delta_{new}(q, \mathcal{X})} \mathrm{wt}(\tau, c) \otimes \mathrm{val}(\tau(c))
\end{align*}
Here the last equality follows from the fact that the tape contains $r^*$, and hence the remaining transitions that read a semiring value, do not apply.

\noindent ($\subseteq$). As for the $\subseteq$ inclusion, we first simplify our task using the following auxiliary lemma:
\setcounter{theorem}{28}
\begin{lemma}
Let $M$ be an SRTM over $\mathcal{R}$ with oracle access to $\limrec$. Then there exists an equivalent SRTM $M'$ with oracle access to $\limrec$ such that for all $q,q' \in Q, \sigma, \sigma' \in \Sigma, d \in \{-1, 1\}$ there is exactly one transition $(q, \sigma, q', \sigma', d, w) \in \delta$.
\end{lemma}
\begin{proof}[Proof (sketch)]
For each $(q, \sigma, q', \sigma', d, w) \in \delta$ introduce new states $q_{(q, \sigma, q', \sigma', d, w), 0}$ and $q_{(q, \sigma, q', \sigma', d, w), 1}$,  and we use transitions 
\begin{align*}
    &(q, \sigma, q_{(q, \sigma, q', \sigma', d, w), 0}, \sigma, -1, e_{\otimes}), & &\\ &(q_{(q, \sigma, q', \sigma', d, w), 0}, \sigma^*, q_{(q, \sigma, q', \sigma', d, w), 1}, \sigma^*, 1, e_{\otimes}) & & \text{ for all } \sigma^* \in \Sigma,\\
    &(q_{(q, \sigma, q', \sigma', d, w), 1}, \sigma, q', \sigma', d, w). & & \\[-2\baselineskip]
\end{align*}
\end{proof}
To prove our result, we apply the following Karp s-reduction. Given a string $x \in (\Sigma_{I} \cup R)^*$ and a new constant $\mathcal{A}$ that does not occur in $\Sigma_{I}$,we define the alphabet $(\Sigma_{I})_{old} = \{(\sigma, \mathcal{A}) \mid \sigma \in \Sigma_{I}\} \cup \{(\mathcal{X}, \mathcal{X})\}$ and construct the string $x' \in ((\Sigma_{I})_{old} \cup R)^*$, where 
\begin{align*}
    x'_{2i + 1} &= \left\{\begin{array}{cc}
        x_i & \text{if } x_i \in R \\
        e_{\oplus} & \text{otherwise}
    \end{array}\right., & & \text{for } i = 0,\dots, |x| - 1,\\
    x'_{2i} &=\left\{\begin{array}{cc}
        (\mathcal{X}, \mathcal{X}) & \text{if } x_i \in R \\
        (x_i, \mathcal{A}) & \text{otherwise}
    \end{array}\right., & & \text{for } i = 0,\dots, |x| - 1.
\end{align*}
Intuitively, we replace alphabet letters $x_i$ with
$(x_i,\mathcal{A})$  and semiring values $x_i$ with $(\mathcal{X}, \mathcal{X})$, and we insert 
after $x_i$ zero respectively $x_i$.
\begin{example}
Consider the input string
$s = a,55,4,b,a,c,3 \in (\Sigma_{I} \cup \mathbb{N})^{*},$ \text{ where } $\Sigma_{I} = \{a,b,c\}$.
Applying the s-reduction yields the string
\[
s' = (a,\mathcal{A}),e_{\oplus},
    (\mathcal{X}, \mathcal{X}), 55,
    (\mathcal{X}, \mathcal{X}), 4,
    (b, \mathcal{A}), e_{\oplus},
    (a, \mathcal{A}), e_{\oplus},
    (c, \mathcal{A}), e_{\oplus},
    (\mathcal{X}, \mathcal{X}), 3
\]
That is, $a$ is expanded to the two symbols $(a,\mathcal{A})$ and $e_{\oplus}$. Further, 55 is expanded to the two symbols $(\mathcal{X}, \mathcal{X})$ and $55$, etc.
\end{example}

Then, given an SRTM $M=(R', Q, \Sigma_{I}, \Sigma, \iota, \mathcal{X}, \sqcup, \delta)$ over $\mathcal{R}$ with oracle access to $\limrec$, we construct a corresponding machine in the model of \cite{DBLP:journals/jair/EiterK23} as $$M_{old} = (R'_{old}, Q_{old}, (\Sigma_I)_{old}, \Sigma_{old}, \iota_{old}, \sqcup_{old}, \delta_{old})$$ over $\mathcal{R}$, where
\begin{align*}
    \Sigma_{old} &= \{(\sigma, \mathcal{A}) \mid \sigma \in \Sigma\} \cup \{(\sigma, \mathcal{X}) \mid \sigma \in \Sigma\} \cup \{(\mathcal{X}, \mathcal{X})\}\\
    \sqcup_{old} &= (\sqcup, \mathcal{A})\\
    \iota_{old} &= read(\iota)
\end{align*}
We construct $\delta_{old}$ as follows:

For each $(q, \sigma, q', \sigma', d, r') \in \delta$, we add
\begin{align*}
    (read(q), (\sigma, \mathcal{A}), step(q', d), (\sigma', \mathcal{A}), d, r')\\
    (read(q), (\sigma, \mathcal{X}), step(q', d), (\sigma', \mathcal{X}), d, r')\\
    (step(q', d), r, read(q'), r, d, e_{\otimes})  \text{ for } r \in R\\
    (step(q', d), \sqcup_{old}, read(q'), \sqcup_{old}, d, e_{\otimes})
\end{align*}
to transition with the correct weight and then skip the location, where the semiring value is stored.

For each $(q, \sigma, q', \sigma', d, \mathcal{X}) \in \delta$, we add
\begin{align*}
    (read(q), (\sigma, \mathcal{X}), weight(q, q', d, \mathcal{X}), (\sigma', \mathcal{X}), 1, e_{\otimes})\\
    weight(q, q', d, \mathcal{X}), r, twostep(q', d), r, -1, r) \text{ for } r \in R\\
    twostep(q', d), \sigma^*, step(q', d), \sigma^*, d, e_{\otimes}) \text{ for } \sigma^* \in \Sigma_{old}\\
    step(q', d), \sigma^*, read(q'), \sigma^*, d, e_{\otimes}) \text{ for } \sigma^* \in \Sigma_{old}
\end{align*}
to transition with the correct weight, go back, and take two steps into the specified direction, skipping the location, where the semiring value is stored.

For each $(q, \sigma, q', \sigma', d, rec_{R^*}) \in \delta$, we add
\begin{align*}
    (read(q), (\sigma, \mathcal{X}), weight(q, q', d, \rec{R^*}), (\sigma', \mathcal{X}), 1, e_{\otimes})\\
    (read(q), (\sigma, \mathcal{A}), weight(q, q', d, \rec{R^*}), (\sigma', \mathcal{A}), 1, e_{\otimes})\\
    (weight(q, q', d, \rec{R^*}), r, twostep(q'), r, -1, r) \text{ for } r \in R\\
    (weight(q, q', d, \rec{R^*}), r, twostep(q'), r, -1, r') \text{ for } r \in R, r' \in R^*\\
    (weight(q, q', d, \rec{R^*}), \sqcup_{old}, twostep(q'), \sqcup_{old}, -1, r') \text{ for } r' \in R^*\\
    twostep(q', d), \sigma^*, step(q', d), \sigma^*, d, e_{\otimes}) \text{ for } \sigma^* \in \Sigma_{old}\\
    step(q', d), \sigma^*, read(q'), \sigma^*, d, e_{\otimes}) \text{ for } \sigma^* \in \Sigma_{old}
\end{align*}
to transition with the correct weight, go back, and take two steps into the specified direction, skipping the location, where the semiring value is stored.

The states are the ones mention in $\delta_{old}$ and $R'_{old}$ consists of the fixed weights mentioned in $\delta_{old}$.

We can apply the same line of reasoning to show that the values of the configurations are the same. Here, we take multiple steps in the old model, where the new machine takes only one but (i) after these steps we arrive in the corresponding configuration, (ii) even though we introduce non-determinism, the configuration after the steps is always the same, and (iii) the transition weights are either the same or, in the $\rec{R*}$ case, sum up to the value of $\rec{R^*}(r)$ for each $r \in R$.
\end{proof}

\setcounter{theorem}{20}
\begin{proposition}
For every semiring $\mathcal{R}$, there exists a function $f \colon R^* \longrightarrow R$ that is \NPnewinf{$\mathcal{R}$}-complete w.r.t.\ Karp s-reductions.
\end{proposition}
\begin{proof}[Proof (sketch)]
Consider an \NPnewinf{$\mathcal{R}$}-complete problem computed by function $f_{c}$. We may assume that $\Sigma_{I} = \{t,f\}$.

Next, consider the following Karp s-reduction $\kappa\colon (\Sigma_I \cup R)^*\longrightarrow R^*$, such that $\kappa(s_0, \dots, s_{n - 1}) = r^{t}, r^{f}, r^{\mathcal{X}}, r^{\mathcal{R}},$ where
\begin{align*}
  &&  r^{t} &= r^{t}_0, \dots, r^{t}_{n - 1}, &  \text{s.t. } r^{t}_i &= \left\{\begin{array}{cc}
        e_{\otimes} & s_i = t, \\
        e_{\oplus} & \text{otherwise},
    \end{array}\right. \text{ for all } 0\leq i<n,  \\
&&   r^{f} &= r^{f}_0, \dots, r^{f}_{n - 1}, & \text{s.t. } r^{f}_i &= \left\{\begin{array}{cc}
        e_{\otimes} & s_i = f, \\
        e_{\oplus} & \text{otherwise},
    \end{array}\right. \text{ for all } 0\leq i<n,  \\
 & &  r^{\mathcal{X}} &= r^{\mathcal{X}}_0, \dots, r^{\mathcal{X}}_{n - 1}, & \text{s.t. } r^{\mathcal{X}}_i &= \left\{\begin{array}{cc}
        e_{\otimes} & s_i \in R, \\
        e_{\oplus} & \text{otherwise},
    \end{array}\right. \text{ for all } 0\leq i<n,  \\
 & &  r^{\mathcal{R}} &= r^{\mathcal{R}}_0, \dots, r^{\mathcal{R}}_{n - 1}, &\text{s.t. } r^{\mathcal{R}}_i &= \left\{\begin{array}{cc}
        s_i & s_i \in R, \\
        e_{\oplus} & \text{otherwise},
    \end{array}\right.  \text{ for all } 0\leq i<n. 
\end{align*}
That is, we replace $s$ by four strings of the same length, where the first one contains ones  ($e_\otimes$) in the positions where $s$ contains $t$'s and zeros everywhere else, the second contains ones in the positions, where $s$ contains $f$'s and zeros everywhere else, the third contains ones in the positions, where $s$ contains a semiring value, and the fourth contains the semiring values that $s$ has in the positions where it has semiring values and zeros everywhere else.

Next, we construct an SRTM $M$, with $\Sigma_{I} = \emptyset$, such that $M(\kappa(s)) = f_{c}(s)$. Then, the function defined by $M$ is \NPnewinf{$\mathcal{R}$}-complete and defined over $R^*$, as required.

We define $M$ according to the following algorithm:
\begin{enumerate}
    \item Check whether the length of the input $x$ is of the form $4n$. If not, halt with weight $e_{\oplus}$.
    \item Guess a string $s \in \{t,f,\mathcal{X}\}^n$. 
    \item For $i \in \{0, \dots, n\}$:
    \begin{enumerate}
        \item If $s_i = t$, go to position $i$; 
        \item If $s_i = f$, go to position $n + i$; 
        \item If $s_i = \mathcal{X}$, go to position $2n + i$. 
        transition with the weight in the current position.
    \end{enumerate}
    \item Finally, simulate  the function $f_{c}$ on $s^{\mathcal{R}}$, where $s^{\mathcal{R}}_i = x_{3n + i}$, if $s_i = \mathcal{X}$ and $s^{\mathcal{R}}_i = s_i$, otherwise.
\end{enumerate}
To verify that $M(\kappa(s)) = f(s)$, we note that for $\kappa(s)$ there exists exactly one guess $s^* \in \{t,f,\mathcal{X}\}^n$ such that the transitions in 3. all have weight $e_{\otimes}$. All other guesses lead to at least one transition with weight $e_{\oplus}$, such that the following computation does not contribute to the final result.

Since we simulate $f_{c}$ on the guessed input, for which  $(s^*)^{\mathcal{R}} = s$ holds, the final result is equal to $f_{c}(s)$ as desired.
\end{proof}

\setcounter{theorem}{21}
\begin{proposition}
There exists a function in $\NPnewinf{\mathcal{R}}$ that is not in $\NPfinite{\mathcal{R}}$.
\end{proposition}
\begin{proof}
It was previously shown  that for semirings that are not finitely generated, the identity function cannot be expressed in \NPfinite{$\mathcal{R}$}, cf. \cite{DBLP:conf/mfcs/BadiaDNP24}. 
More in detail, the identity function $f_{id}$ returns, given an input $x$, the  value $x_0$ in the first position of $x$ if it is a semiring value and zero ($e_\oplus)$ otherwise. 
The SRTM model we introduced can express $f_{id}$, as evidenced by the machine
$M = (\emptyset, \{\iota, f\}, \emptyset, \{\mathcal{X}, \sqcup\}, \iota, \sqcup, \delta)$, where 
\[
\delta = \{(\iota, \mathcal{X}, f, \mathcal{X}, 1, \mathcal{X})\}.
\]
If we have a semiring value as the first value in an input, we transition with its weight and stop. Thus, in this case, the output is the first value in the input.
\end{proof}

\subsection{Proofs adapted to the new machine model}
\label{appendix:fixed_proofs}
\setcounter{theorem}{18}
\begin{lemma}
Let $\mathcal{R}$ be a semiring and $f_i\colon (\Sigma \cup R)^{*} \longrightarrow R$, for $i =1,2$. If $f_2 \in \NPnewinf{\mathcal{R}}$ and $f_1$ is Karp s-reducible to $f_2$, then $f_1 \in \NPnewinf{\mathcal{R}}$.
\end{lemma}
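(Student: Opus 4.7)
The plan is to adapt the standard ``closure under reductions'' argument to the semiring setting, exploiting the key feature of the new SRTM model that the alphabet layer of the tape can be freely rewritten while semiring-layer cells remain pinned in place by the input. First I would fix a polynomial-time SRTM $M_2$ computing $f_2$ together with a Karp s-reduction $T$ from $f_1$ to $f_2$ with fixed surrogate values $R' = \{r_1, \ldots, r_k\}$. Because Karp s-reductions are surrogate-based, $T$ can be computed by an ordinary deterministic polynomial-time Turing machine operating on the surrogate alphabet alone, without any semiring arithmetic.

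On input $s = \sigma(x)$, the new SRTM $M_1$ would first perform the reduction in a dedicated work region: it computes, using alphabet-layer writes and transitions of weight $e_\otimes$, the surrogate image $T(x)$ and, side by side with it, an \emph{indirection table} that records, for each surrogate slot $j$ in $T(x)$ that should carry a semiring value, either a fixed $r \in R'$ or the index $i$ into $s$ where the corresponding semiring value physically resides. Both tasks are deterministic and polynomial-time, so this phase contributes weight $e_\otimes$ to every run. Next, $M_1$ simulates $M_2$ on the virtual input $T(s) = \sigma(T(x))$: the simulated state and head-position information of $M_2$ are tracked in the alphabet layer, and $M_2$'s alphabet tape lives in the workspace. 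Transitions of $M_2$ of fixed weight $r \in R'_2$ are translated one-for-one into transitions of $M_1$ of the same weight. Whenever $M_2$ would execute a weight-$\mathcal{X}$ transition on a surrogate slot of $T(s)$, $M_1$ consults the indirection table: if the table gives a known $r \in R'$, $M_1$ performs a single transition of weight $r$; otherwise $M_1$ deterministically walks its physical head (using weight-$e_\otimes$ moves) to the associated input position $i$, executes one weight-$\mathcal{X}$ transition there to absorb the actual semiring value, and walks back. Because every auxiliary step is deterministic with weight $e_\otimes$, runs of $M_2$ on $T(s)$ are in one-to-one weight-preserving correspondence with runs of $M_1$ on $s$, giving $\|M_1\|(s) = \|M_2\|(T(s)) = f_2(\sigma(T(x))) = f_1(\sigma(x)) = f_1(s)$.

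The main obstacle I anticipate is precisely that semiring-layer cells cannot be moved on the tape, so $M_1$ cannot literally materialise $T(s)$ as a genuine SRTM input. The indirection-table construction is the technical device that sidesteps this: since $T$ is agnostic about concrete semiring values, every semiring value appearing in $T(s)$ must provably come either from the fixed set $R'$ or from a deterministically recoverable position of $s$, and this dependence can be made explicit purely in the alphabet layer. The polynomial time bound is then routine: $T$ is polynomial-time, the simulation of $M_2$ is polynomial-time, and each fetch adds at most a polynomial number of extra steps, so $M_1$ runs in polynomial time in $|s|$, placing $f_1 \in \NPnewinf{\mathcal{R}}$.
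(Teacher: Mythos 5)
Your proposal is correct and follows essentially the same route as the paper: compute a surrogate-alphabet image of the input, apply the polynomial-time reduction $T$ to it, then simulate $M_2$ while resolving each weight-$\mathcal{X}$ step either by a direct transition with a constant from $R'$ (added to the new machine's constant set) or by deterministically walking the head back to the physical position of the corresponding semiring value in the original input. Your ``indirection table'' plays exactly the role of the paper's $SV^{n+k}E$ surrogate encoding, which records the ordinal of each semiring-value occurrence and distinguishes the first $k$ slots (reduction constants) from the rest (values to be fetched from the input).
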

\begin{proof}
Let $T$ be the reduction function and $R' = \{r_1, \dots, r_k\}$ the associated set of semiring constants. Let $M=(R_M', Q, \Sigma_{I}, \Sigma, \iota, \mathcal{X}, \sqcup, \delta)$ be an SRTM that solves $f_2$. We construct an SRTM $M'=(R_M' \cup R', Q', \Sigma_{I}, \Sigma', \iota', \mathcal{X}, \sqcup, \delta')$ that solves $f_1$. Note that we give $M'$ access to both $R_M'$, the set of semiring constants from $M$, and $R'$, the set of semiring constants, fixed in the reduction. 

While SRTMs cannot distinguish different semiring values, they can check whether there is a semiring value on the tape initially in a given position. Thus, given some input $x \in (\Sigma_{I} \cup R)^*$ we can obtain in polynomial time a string $x'\in s_{S,V,E}(\Sigma_{I})$ from $x$ by using $SV^{n+k}E$ for the $n^{\text{th}}$ semiring value occurrence in $x$ (from left to right). E.g.\ the string $x = x_1x_2x_3r_1x_4x_5r_2 \in (\Sigma_{I} \cup R)^{*}$, where $x_1, \dots, x_4 \in \Sigma_{I}$ and $r_1, r_2 \in R$, leads to the string $x' = x_1x_2x_3SVEx_4x_5SVVE \in s_{S,V,E}(\Sigma_{I})$ when $k = 0$. Then, we can apply $T$ to $x'$ in polynomial time. Finally, we can execute the algorithm for $f_2$ on $T(x')$. Here, we only need polynomial extra time to look up the $(n-k)^{\text{th}}$ semiring value in the original input when we need to make a transition with a semiring value under the head and read $SV^{n}E$ for $n > k$. For $n \leq k$ we cannot transition with a weight under the head, since $r_n$ does not necessarily occur in $x$. However, since we added $R'$ to the set of constants that $M'$ has access to, we can simply specify a transition with weight $r_n$ directly.
\end{proof}

Note that this proof is essentially the same as the previous proof, however, with the new machine model, it is actually always possible to look up the $(n-k)^{\text{th}}$ semiring value on the tape.

\begin{theorem}
\textsc{SAT}($\mathcal{R}$) is \NPnewinf{$\mathcal{R}$}-complete w.r.t.\ Karp s-reductions for every commutative semiring $\mathcal{R}$.
\end{theorem}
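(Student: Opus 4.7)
The plan is to prove both directions: \emph{membership} $\textsc{SAT}(\mathcal{R}) \in \NPnewinf{\mathcal{R}}$ and \emph{hardness} that every $f \in \NPnewinf{\mathcal{R}}$ Karp s-reduces to $\textsc{SAT}(\mathcal{R})$. Closure of $\NPnewinf{\mathcal{R}}$ under Karp s-reductions (the corrected \cref{lem:vred}) then yields completeness from these two ingredients.

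For membership I would build a polynomial-time SRTM $M_{\textsc{SAT}}$ that, given a $\bigfplus$BF $\alpha$ on its input tape (with the $R$-constants of $\alpha$ stored as semiring-valued cells marked by $\mathcal{X}$), recursively evaluates $\llbracket \alpha \rrbracket_{\mathcal{R}}(\emptyset)$. Each sum quantifier $\bigfplus v$ is simulated by two $e_\otimes$-weighted nondeterministic transitions that record the guess $v \mapsto \mathrm{true}/\mathrm{false}$ on an auxiliary region of the tape, so that aggregation via $\oplus$ reproduces the desired sum. A subformula $\alpha_1 \fplus \alpha_2$ is treated by an analogous weight-$e_\otimes$ nondeterministic choice between the two subformulas, whereas $\alpha_1 \fmal \alpha_2$ is evaluated sequentially so that the transition weights multiply along every run. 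A constant $k \in R$ appearing on the tape is consumed by a single $\mathcal{X}$-transition after moving the head to its position---an operation specifically enabled by the revised SRTM model---and literals $v$, $\neg v$ are evaluated by consulting the recorded guess, transitioning with weight $e_\otimes$ if satisfied and $e_\oplus$ otherwise. Correctness follows by structural induction on $\alpha$ and the running time is polynomial in $|\alpha|$.

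For hardness I would perform a semiring Cook--Levin translation. Given an SRTM $M$ with time bound $n^k$ and an input $x$ of length $n$, the Karp s-reduction outputs a $\bigfplus$BF $\phi_x$ whose outer $\bigfplus$ quantifiers enumerate propositional encodings of length-$n^k$ runs of $M$ on $x$. The variables are $T_{t,p,\sigma}$, $H_{t,p}$, $S_{t,q}$ and $D_{t,\tau}$ (time $t$, position $p$, symbol $\sigma$, state $q$, transition $\tau \in \delta$), all bound by $\bigfplus$; the quantifier-free body is the $\fmal$-product of a one-hot consistency factor, an initial-configuration factor (referencing the alphabet cells of $x$ as literal constants), a transition-validity factor, and a weight factor
\[
\bigotimes_{t} \Bigl( \bigoplus_{\tau \in \delta,\ \mathrm{wt}(\tau) \in R'} \mathrm{wt}(\tau) \fmal D_{t,\tau}\ \fplus\ \bigoplus_{\tau \in \delta,\ \mathrm{wt}(\tau) = \mathcal{X}} D_{t,\tau} \fmal \bigoplus_{p} H_{t,p} \fmal c^R_p \Bigr),
\]
where $c^R_p$ is the input semiring value at position $p$, transmitted verbatim into its designated slot by the $s_{S,V,E}$ surrogate alphabet. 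Summing the body over all assignments collapses to the sum over valid computations of $M$ on $x$, each contributing the product of its transition weights, which equals $\lVert M\rVert(x)$.

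The main obstacle is engineering the body so that invalid encodings contribute $e_\oplus$ and valid ones contribute exactly the run weight, using only the impoverished $\bigfplus$BF syntax (negation only in front of atoms, no $\land$ or $\lor$ on compound formulas). Over a non-idempotent semiring one cannot imitate disjunction by $\fplus$ naively, so each consistency clause must be written through mutually exclusive one-hot sub-cases such as $\bigoplus_i v_i \fmal \bigotimes_{j \neq i} \neg v_j$ expressing ``exactly one of $v_1,\ldots,v_m$'', which evaluates to $e_\otimes$ on valid encodings and to $e_\oplus$ otherwise. A secondary but important concern is that $\mathcal{X}$-weighted transitions force the reduction to place each input semiring value $c^R_p$ at exactly the right slot of the output formula, which is precisely what the surrogate encoding of a Karp s-reduction permits.
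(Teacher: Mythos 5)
Your proposal follows essentially the same route as the paper's proof: membership via an SRTM that nondeterministically branches on $\bigfplus$/$\fplus$ and sequentially evaluates $\fmal$, and hardness via a weighted Cook--Levin encoding whose $\bigfplus$-quantified tape/head/state variables are constrained by mutually exclusive product clauses, with fixed transition weights drawn from the reduction's constant set $R'$ and $\mathcal{X}$-weights supplied by surrogates for the input's semiring values at the head position. The only detail to add is the paper's clause forcing halted configurations to repeat until the time bound is reached (otherwise your per-step weight factor evaluates to $e_\oplus$ on runs that halt early); your transition-selector variables $D_{t,\tau}$ and one-hot encodings are cosmetic variants of the paper's inlined weights and its at-most-one/at-least-one clause pairs.
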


\begin{proof}
Containment follows from a simple algorithm that non-deterministically visits all subformulas of $\fplus$ formulas and deterministically/sequentially visits all subformulas of $\fmal$ formulas. For precise algorithm and containment proof see Algorithm 1 and the proof sketch of Theorem 18 in~\cite{DBLP:journals/jair/EiterK23}.

For hardness we generalize the Cook-Levin Theorem. So let $M=(R', Q, \Sigma_{I}, \Sigma, \iota, \mathcal{X}, \sqcup, \delta)$ be a polynomial time SRTM and $x \in (\Sigma \cup R)^*$ be the input for which we want to compute the output of $M$. 


We define the following propositional variables:
\begin{itemize}
    \item $T_{i,j,k}$, which is true if tape cell $i$ contains symbol $j$ at step $k$ of the computation.
    \item $H_{i,k}$, which is true if the $M$'s read/write head is at tape cell $i$ at step $k$ of the computation.
    \item $Q_{q,k}$, which is true if $M$ is in state $q$ at step $k$ of the computation. 
\end{itemize}
Furthermore, we need the following surrogates for semiring values
\begin{itemize}
    \item $\text{surr}_{i^{\text{th}}}$, which is the surrogate for the semiring value at the $i$\textsuperscript{th} position of the input $x$. Note that there may not be a semiring value at every index, in this case we assign $\text{surr}_{i^{\text{th}}}$ a fixed surrogate equal to $e_{\oplus}$, and
    \item $\text{surr}_{r_1}, \dots, \text{surr}_{r_m}$, which are respectively the surrogates for the constant semiring values $r_1, \dots, r_m$ that $M$ has access to. 
    \item These surrogates are all distinct.
\end{itemize}
Since $M$ is a polynomial time SRTM, we can assume the existence of a polynomial $p$ such that $p(n)$ bounds the number of transitions of $M$ on any input of length $n$.

Given a finite set $I$ and a family $(\beta_i)_{i \in I}$ of weighted formulas, we use the following shorthand:  
\[
\bigfplus_{i \in I} \beta_{i} = \left\{\begin{array}{cc}
    \szero & \text{ if } I = \emptyset \\
    \beta_{i^*} \fplus \bigfplus_{i \in I \setminus \{i^*\}} \beta_i & \text{ if } i^* \in I
\end{array}\right..
\]
Note that this is well defined, since $I$ is finite and addition is commutative and associative.

We define a weighted QBF $\bigfplus \mathbf{T}\; \bigfplus \mathbf{H} \;\bigfplus \mathbf{Q} \;\alpha$, where $\bigfplus \mathbf{T}$, $\bigfplus \mathbf{H}$, and $\bigfplus \mathbf{Q}$ correspond to the (sum) quantification of all variables $T_{i,j,k}$, $H_{i,k}$, and $Q_{q,k}$, respectively, and $\alpha$ is the product (i.e.\ connected with $\fmal$) of the following subformulas
\begin{enumerate}
    \item  $T_{i,j,0}$ \\
    Variable ranges: $0 \leq i < n$\\
    For each tape cell $i$ that initially contains symbol $j \in \Sigma_I$ or $j = \mathcal{X}$ when it contains a semiring value.  
    \item  $T_{i,\sqcup,0}$ \\
    Variable ranges: $-p(n) \leq i < 0$ or $n \leq i \leq p(n)$\\
    Each tape cell $i$ outside of the ones that contain the input contains $\sqcup$. 
    \item $Q_{\iota,0}$ \\
    The initial state of $M$ is $\iota$.
    \item $H_{0,0}$ \\ 
    The initial position of the head is $0$.
    \item $\neg T_{i,j,k}\fplus T_{i,j,k} \fmal \neg T_{i,j',k}$ \\ 
    Variable ranges: $-p(n) \leq i \leq p(n), j \in \Sigma, 0 \leq k \leq p(n)$\\
    There is at most one symbol per tape cell.
    \item $\bigfplus_{j\in \Sigma}T_{i,j,k}$ \\
    Variable ranges: $-p(n) \leq i \leq p(n), 0 \leq k \leq p(n)$\\
    There is at least one symbol per tape cell.
    \item $\neg T_{i,j,k} \fplus T_{i,j,k} \fmal \neg T_{i,j',k+1} \fplus T_{i,j,k} \fmal T_{i,j',k+1} \fmal H_{i,k}$ \\ 
    Variable ranges: $-p(n) \leq i \leq p(n),  j \neq j' \in \Sigma, 0 \leq k < p(n)$\\ Tape remains unchanged unless written.
    \item $\neg Q_{q,k} \fplus Q_{q,k} \fmal \neg Q_{q',k}$ \\ 
    Variable ranges: $q \neq q' \in Q, 0 \leq k \leq p(n)$\\
    There is at most one state at a time.
    \item $\neg H_{i,k} \fplus H_{i,k} \fmal \neg H_{i',k}$ \\ 
    Variable ranges: $i \neq i', -p(n) \leq i \leq p(n), -p(n) \leq i' \leq p(n), 0 \leq k \leq p(n)$\\
     There is at most one head position at a time.
    \item $\begin{array}{l}
        \neg H_{i,k}\fplus H_{i,k} \fmal \neg Q_{q,k} \fplus H_{i,k} \fmal Q_{q,k} \fmal \neg T_{i,\sigma ,k} \fplus \\
        H_{i,k} \fmal Q_{q,k} \fmal T_{i,\sigma ,k}\fmal \Sigma_{((q,\sigma ),(q',\sigma'),d, r)\in \delta} H_{i+d, k+1} \fmal Q_{q', k+1}\fmal T_{i, \sigma ', k+1} \fmal \text{surr}_{r}
    \end{array}$ \\ 
    Variable ranges: $-p(n) \leq i \leq p(n), 0 \leq k < p(n)$ and $q \in Q$, $\sigma \in \Sigma$ s.t. there exist $q', 
    \sigma', d, r$ s.t. $((q,\sigma ),(q',\sigma'),d, r)\in \delta$. \\
    Possible weighted transitions with a fixed weight at computation step $k$ when head is at position $i$.\\
    \item $\begin{array}{l}
        \neg H_{i,k}\fplus H_{i,k} \fmal \neg Q_{q,k} \fplus H_{i,k} \fmal Q_{q,k} \fmal \neg T_{i,\sigma ,k} \fplus \\
        H_{i,k} \fmal Q_{q,k} \fmal T_{i,\sigma ,k}\fmal \Sigma_{((q,\sigma ),(q',\sigma'),d, \mathcal{X})\in \delta} H_{i+d, k+1} \fmal Q_{q', k+1}\fmal T_{i, \sigma ', k+1} \fmal \text{surr}_{i^{\text{th}}}
    \end{array}$ \\ 
    Variable ranges: $-p(n) \leq i \leq p(n), 0 \leq k < p(n)$ and $q \in Q$, $\sigma \in \Sigma$ s.t. there exist $q', 
    \sigma', d$ s.t. $((q,\sigma ),(q',\sigma'),d, \mathcal{X})\in \delta$. \\
    Possible weighted transitions with weight from the tape at computation step $k$ when head is at position $i$.\\
    \item $\begin{array}{c}
        \phantom{\fplus}\neg H_{i,k}\fplus H_{i,k} \fmal \neg Q_{q,k} \fplus H_{i,k} \fmal Q_{q,k} \fmal \neg T_{i,\sigma ,k} \\
        \fplus
        H_{i,k} \fmal Q_{q,k} \fmal T_{i,\sigma ,k}\fmal  H_{i, k+1} \fmal Q_{q, k+1}\fmal T_{i, \sigma, k+1}
    \end{array}$ \\ 
    Variable ranges: $-p(n) \leq i \leq p(n), 0 \leq k < p(n)$ and $q \in Q$, $\sigma \in \Sigma$ s.t. there do not exist $q', 
    \sigma', d, r$ s.t. $((q,\sigma ),(q',\sigma'),d, r)\in \delta$. \\
    The machine computation has ended. This is included so that when the machine has reached a final state it stays the same until $k$ reaches $p(n)$.
\end{enumerate}
In order to prove correctness, i.e., that the value of $M$ on $x$ is equal to $\llbracket \bigfplus \mathbf{T}\; \bigfplus \mathbf{H} \;\bigfplus \mathbf{Q} \;\alpha \rrbracket_{\mathcal{R}}(\emptyset)$, we prove two claims.
\begin{enumerate}
    \item For each interpretation $\mathcal{I}$ of the variables $T_{i,j,k}, H_{i,k}, Q_{q,k}$ for $-p(n) \leq i \leq p(n), 0 \leq k < p(n)$ and $q \in Q$ such that $\mathcal{I}$ does not correspond to a computation path of $M$ on $x$, it holds that $\llbracket \alpha \rrbracket_{\mathcal{R}}(\mathcal{I}) = \szero$.
    \item For each interpretation $\mathcal{I}$ of the variables $T_{i,j,k}, H_{i,k}, Q_{q,k}$ for $-p(n) \leq i \leq p(n), 0 \leq k < p(n)$ and $q \in Q$ such that $\mathcal{I}$ corresponds to a computation path $\pi$ of $M$ on $x$ along configurations $c_1^{\pi} \stackrel{r^{(\pi_1)}}{\rightarrow} \dots \stackrel{r^{(\pi_{n(\pi)-1})}}{\rightarrow} c_{n(\pi)}^{\pi}$ it holds that 
    \[
    \llbracket \alpha \rrbracket_{\mathcal{R}}(\mathcal{I}) = \bigfplus_{\pi', \text{ s.t.} c_i^{\pi} = c_{i}^{\pi'}}r^{(\pi'_1)} \fmal \dots \fmal r^{(\pi'_{n(\pi')-1})}.
    \]
\end{enumerate}
If both claims hold, then it follows that $\llbracket \bigfplus \mathbf{T}\; \bigfplus \mathbf{H} \;\bigfplus \mathbf{Q} \;\alpha \rrbracket_{\mathcal{R}}(\emptyset)$ is equal to the sum of $\llbracket \alpha \rrbracket_{\mathcal{R}}(\mathcal{I})$ over all interpretations $\mathcal{I}$ such that $\mathcal{I}$ corresponds to a computation path. For each of them, we know that the weight of the path is the product of the weights of the taken transitions, according to (ii). Since the value of $M$ on $x$ is equal to the sum of the weights of the paths, this implies correctness of the reduction.

We proceed to prove the claims. 
For this, we first generally show that the added subformulas 1.\ to 12.\ enforce their given purpose.

For 1.\ to 4.\ this is clear: In order for $\llbracket \alpha \rrbracket_{\mathcal{R}}(\mathcal{I})$ to be unequal to $\szero$, the variables need to be included in the interpretation.

5.\ and 6.\ together ensure that at each time step there is exactly one symbol in each tape cell. So assume that $T_{i,j,k}$ and $T_{i,j',k}$ are in $\mathcal{I}$. Then 
\[
\llbracket \neg T_{i,j,k}\fplus T_{i,j,k} \fmal \neg T_{i,j',k} \rrbracket_{\mathcal{R}}(\mathcal{I}) = \szero \fplus \szero = \szero,
\]
and so $\llbracket \alpha \rrbracket_{\mathcal{R}}(\mathcal{I}) = \szero$. 
Assume alternatively that there are $i,k$ such that for no $j$ the variable $T_{i,j,k}$ is in $\mathcal{I}$. Then
\[
\llbracket \bigfplus_{j\in \Sigma }T_{i,j,k} \rrbracket_{\mathcal{R}}(\mathcal{I}) = \szero.
\]

7.\ ensures that the tape remains unchanged unless written, i.e., unless the head is at position $i$ at step $k$ the value of the tape cell $i$ stays the same. So assume that $H_{i,k}$ is not in $\mathcal{I}$ but $T_{i,j,k}$ and $T_{i,j',k}$ for $j \neq j'$ are in $\mathcal{I}$. Then
\[
\llbracket \neg T_{i,j,k} \fplus T_{i,j,k} \fmal \neg T_{i,j',k+1} \fplus T_{i,j,k} \fmal T_{i,j',k+1} \fmal H_{i,k} \rrbracket_{\mathcal{R}}(\mathcal{I}) = \szero \fplus (\sone \fmal \szero) \fplus (\sone \fmal \sone \fmal \szero) = \szero.
\]

8.\ ensures that there is at most one state at a time by analogous reasoning to 5.

9.\ ensures that there is at most one head position at a time by analogous reasoning to 5.

10. and 11.\ model the possible transitions at computation step $k$ when the head is at position $i$ including their respective weights \emph{if there is such a transition} for the given state and tape cell entry. Note for this that the formulas are the same except for the weight, expressed by the surrogate Otherwise, this subformula is not added but the one in 12.\ is. So assume that $H_{i,k}, Q_{q,k}, T_{i,\sigma,k} \in \mathcal{I}$. Then the value of the subformula 10.\, when replacing the surrogate $\text{surr}_{r}$ by the corresponding semiring value $r$, is 
\begin{align*}
    &\llbracket \neg H_{i,k}\fplus H_{i,k} \fmal \neg Q_{q,k} \fplus H_{i,k} \fmal Q_{q,k} \fmal \neg T_{i,\sigma ,k} \rrbracket_{\mathcal{R}}(\mathcal{I}) \fplus \\
    &\llbracket H_{i,k} \fmal Q_{q,k} \fmal T_{i,\sigma ,k}\fmal \Sigma_{((q,\sigma ),(q',\sigma'),d, r)\in \delta'} H_{i+d, k+1} \fmal Q_{q', k+1}\fmal T_{i, \sigma ', k+1} \fmal r \rrbracket_{\mathcal{R}}(\mathcal{I})\\
    = \phantom{\fplus} & \szero \fplus \sone \fmal \llbracket \Sigma_{((q,\sigma ),(q',\sigma'),d, r)\in \delta'} H_{i+d, k+1} \fmal Q_{q', k+1}\fmal T_{i, \sigma ', k+1} \fmal r \rrbracket_{\mathcal{R}}(\mathcal{I})\\
    = \phantom{\fplus} & \llbracket \Sigma_{((q,\sigma ),(q',\sigma'),d, r)\in \delta'} H_{i+d, k+1} \fmal Q_{q', k+1}\fmal T_{i, \sigma ', k+1} \fmal r \rrbracket_{\mathcal{R}}(\mathcal{I})
\end{align*}
The analogous holds for the formula of 11. when replacing the surrogate  $\text{surr}_{i^{\text{th}}}$ by the corresponding semiring value $r_{i^\text{th}}$ at position $i$ of the input.

This means that the expression evaluates to the sum of all weights of the transitions we take. Note that in order for two transitions $((q,\sigma ),(q_1',\sigma_1'),d_1, r_1),$ $((q,\sigma ),(q_2',\sigma_2'),d_2, r_2)$ to be different at least one of $q_1' \neq q_2'$, $\sigma_1' \neq \sigma_2'$, $d_1 \neq d_2$ or $r_1 \neq r_2$ needs to hold. If $q_1' \neq q_2'$, $\sigma_1' \neq \sigma_2'$ or $d_1 \neq d_2$ then one of 5., 8., or 9.\ is falsified. Thus, in this case, we can take multiple transitions if they differ in the weights only. On the other hand, we must take at least one transition, since otherwise the whole sum evaluates to $\szero$. It follows that we transition to exactly one new configuration to obtain a non-zero value for $\alpha$. In that case, we have $H_{i+d, k+1}, Q_{q', k+1}, T_{i, \sigma ', k+1} \in \mathcal{I}$ for the corresponding transition(s) $((q,\sigma ),(q',\sigma'),d, r) \in \delta$ and 
\begin{align*}
&\llbracket \Sigma_{((q,\sigma ),(q',\sigma'),d, r)\in \delta'} H_{i+d, k+1} \fmal Q_{q', k+1}\fmal T_{i, \sigma ', k+1} \fmal r \rrbracket_{\mathcal{R}}(\mathcal{I}) \\
\fplus &\llbracket \Sigma_{((q,\sigma ),(q',\sigma'),d, \mathcal{X})\in \delta'} H_{i+d, k+1} \fmal Q_{q', k+1}\fmal T_{i, \sigma ', k+1} \fmal r_{i^{\text{th}}} \rrbracket_{\mathcal{R}}(\mathcal{I}) \\
= &\bigfplus_{((q,\sigma ),(q',\sigma'),d, r)\in \delta} r \\
\fplus &\bigfplus_{((q,\sigma ),(q',\sigma'),d, \mathcal{X})\in \delta} r_{i^{\text{th}}}.
\end{align*}

12.\ models that if at computation step $k$ when the head is at position $i$ \emph{there is no possible transition} for the given state $q$ and tape cell entry $\sigma$, then the head position, state and tape cell entries stay the same.
Otherwise, this subformula is not added but one in 10.\ is. So assume that $H_{i,k}, Q_{q,k}, T_{i,\sigma,k} \in \mathcal{I}$. Then the value of the subformula is 
\begin{align*}
    &\llbracket \neg H_{i,k}\fplus H_{i,k} \fmal \neg Q_{q,k} \fplus H_{i,k} \fmal Q_{q,k} \fmal \neg T_{i,\sigma ,k} \rrbracket_{\mathcal{R}}(\mathcal{I}) \fplus \\
    &\llbracket H_{i,k} \fmal Q_{q,k} \fmal T_{i,\sigma ,k}\fmal  H_{i, k+1} \fmal Q_{q, k+1}\fmal T_{i, \sigma, k+1}  \rrbracket_{\mathcal{R}}(\mathcal{I})\\
    = \phantom{\fplus} & \szero \fplus \sone \fmal \llbracket H_{i, k+1} \fmal Q_{q, k+1}\fmal T_{i, \sigma, k+1} \rrbracket_{\mathcal{R}}(\mathcal{I})
\end{align*}
This means that the expression evaluates to $\sone$, if $H_{i,k+1}, Q_{q,k+1}, T_{i,\sigma,k+1} \in \mathcal{I}$ and evaluates to $\szero$, otherwise. Thus, the formula enforces the desired constraint: if we are in a configuration without further transitions we stay in it until the time limit $p(n)$ is reached. Notably, this does not influence the weight, since we always obtain a factor of $\sone$.

Putting things together, we see that 8.\ and 10./11./12.\ together ensure that there is exactly one state at each time. Similarly, 9.\ and 10./11./12.\ together ensure that there is exactly one head position at each time. This together with the constraints associated originally with 1.\ to 12.\ show that the desired claims and therefore also the theorem hold.
\end{proof}

\section{An application in database theory: data complexity of (Unions of) Conjunctive Queries}\label{appendix:UCQ}

Green at al.~\cite{GKV} introduced a semiring-based semantics for relational algebra queries.  Its main use case as described in the paper is to allow queries for the provenance of answers to standard queries.  The semantics is capable of expressing bag semantics, why-provenance and more.  Here intuitively, bag semantics tell us not only whether we can derive a query but also how often we can derive it.  Why-provenance tells us instead the different reasons why we  have  to  derive  a  query.   There  are  more  possibilities  but  generally,  the  semiring semantics allows us to obtain more, often quantitative, information about the derivations of a query.

For positive logic programs, i.e.\ datalog programs, their semantics over a commutative semiring $(R, \splus, \smal, \szero, \sone)$ is as follows: the label of a query result $q(\overline{x})$ is the sum (using $\splus$) of the labels of derivation trees for $q(\overline{x})$, where the label of a derivation tree is the product (using $\smal$) of the labels of the leaf nodes (i.e. extensional atoms). As the number of derivation trees may be countably infinite, Green et al. used $\omega$-continuous semirings, where countable sums have a value. Examples of such semirings are $\mathbb{N}_{\infty} = (\mathbb{N}\cup \{\infty\}, +, \cdot, 0, 1)$, the natural number semiring extended with infinity, $\mathbb{B}$, and other idempotent semirings.

\setcounter{theorem}{30} 
\begin{example}[Bag Semantics]
\label{ex:trans}

Consider the program
\begin{align*}
    r_1{:}\; q(X,Y) &\leftarrow r(X,Y) & 
    r_2{:}\; q(X,Y) &\leftarrow q(X,Z), q(Z,Y)
\end{align*}
over $\mathbb{N}_{\infty}$ and the semiring-weighted extensional database (edb) 
\[
\{(r(a,b),2), (r(b,c),3), (r(b,a),4)\}.
\]
Here, $\mathbb{N}_{\infty}$ corresponds to bag semantics, which means that the label of a query $q(\overline{x})$ is the number of derivations of $q(\overline{x})$. The labels of the atoms in the semiring-weighted edb thus intuitively mean that there are $2, 3$, and $4$ ways to derive $r(a,b), r(b,c)$, and $r(b,a)$, respectively.

It follows that the label of $q(b,c)$ under bag semantics is $3$ due to the derivation $q(b,c) \leftarrow r(b,c)$ from $r(b,c)$, which itself has $3$ derivations according to the edb, and the fact that this is the only derivation for $q(b,c)$. On the other hand, for $q(a,a)$ the label under bag semantics is $\infty$. This can be seen as follows: we can derive $q(a,a)$ from $r(a,b), r(b,a)$ in $2\,{\times}\,4$ ways according to their labels in the edb. But there exist infinitely many derivations of $q(a,a)$ from itself using rule $r_2$ instantiated as $q(a,a) \leftarrow q(a,a), q(a,a)$, leading to the label $\infty$.
\end{example}



The \noindent\textbf{Combined Complexity of Conjunctive Queries (CQs)} is the following problem:
Given a positive, single-rule datalog program 
\[
\Pi = \{ q \leftarrow r_1(\vec{Y_1}), \dots, r_n(\vec{Y_n}) \}
\]
and a semiring-weighted extensional database $D$ over a commutative semiring $\mathcal{R}$ as input, compute the label of $q$.

It was shown that combined complexity of CQs is \NPnewinf{$\mathcal{R}$}-complete~\cite{DBLP:journals/jair/EiterK23}:

\begin{theorem}[Theorem~26 of \cite{DBLP:journals/jair/EiterK23}]
\label{thm:harddataCQ}
Given a positive, single-rule datalog program 
\[
\Pi = \{ q \leftarrow r_1(\vec{Y_1}), \dots, r_n(\vec{Y_n}) \}
\]
and a semiring-weighted extensional database $D$ over a commutative semiring $\mathcal{R}$ as input, it is \NPnewinf{$\mathcal{R}$}-complete w.r.t.\ Karp s-reductions to compute the label of $q$.
\end{theorem}

In the following, we will show that the same holds true for unions of conjunctive queries with semiring semantics.

The \noindent\textbf{Combined Complexity of Unions of Conjunctive Queries (UCQs)} is the following problem:
Given a set of positive, single-rule datalog program 
\[
\{\Pi_1, \dots, \Pi_m\}
\]
and a semiring-weighted extensional database $D$ over a commutative semiring $\mathcal{R}$ as input, compute the sum of the labels of $q_1, \dots, q_n$.
\begin{lemma}[CQ vs UCQ]
    There is a Karp s-reduction from UCQ to CQ.
\end{lemma}
\begin{proof}[Proof (Sketch)]
    We add a new constant $\bot$ and next add to the extensional database $D$, the edb
    \begin{align*}
    &\{(r(\vec{\bot}), \sone) \mid r \text{ is a predicate in }D\}\\
    \cup &\{(\text{iff}(x, x'), \sone), (\text{iff}(\bot,\bot), \sone) \mid x, x' \text{ is a constant from }D\}\\
    \cup &\{(\text{exactlyOne}(x, \bot, \dots, \bot), \sone), \dots,(\text{exactlyOne}(\bot, \dots, \bot, x), \sone) \mid x \text{ is a constant from }D\}.
    \end{align*}
    Then the query
    \begin{align*}
    q \leftarrow &r_1^{(1)}(\vec{Y}_{1}^{(1)}), \dots, r_{n_1}^{(1)}(\vec{Y}_{n_1}^{(1)}), \\
    &\dots\\
    &r_1^{(m)}(\vec{Y}_{1}^{(m)}), \dots, r_{n_m}^{(m)}(\vec{Y}_{n_m}^{(m)}), \\
    &\text{iff}(Y_{1,1}^{(1)}, Y_{1,2}^{(1)}), \dots, \text{iff}(Y_{n_1,k_{n_1} - 1}^{(1)}, Y_{n_1,k_{n_1}}^{(1)}), \\
    &\dots\\
    &\text{iff}(Y_{1,1}^{(m)}, Y_{1,2}^{(m)}), \dots, \text{iff}(Y_{n_m,k_{n_m} - 1}^{(m)}, Y_{n_m,k_{n_m}}^{(m)}), \\
    &\text{exactlyOne}(Y_{1,1}^{(1)}, \dots, Y_{1,1}^{(m)})
    \end{align*}
    That is, we have all $m$ basic queries, with mutually different variables. Additionally, the ``iff'' lines ensure that either all variables from the $i$-th query are assigned $\bot$ or none are.
    Last but not least the ``exactlyOne'' predicate ensures that exactly one of the sets of variables has proper values assigned and all others have $\bot$ assigned.

    Thus, there is a bijection between the assignments to the variables that have a non-zero value and the tuples of queries $q_i$ and assignments $\sigma_i$ such that $q_i$ has a non-zero value for the variable assignment $\sigma_i$.

    Since $r(\vec{\bot})$ has value $\sone$, for each of the original predicates $r$, the values are preserved as desired.
\end{proof}

\begin{corollary}

\label{thm:harddataU}
The combined complexity of UCQs is \NPnewinf{$\mathcal{R}$}-complete w.r.t.\ Karp s-reductions.
\end{corollary}

\cref{thm:harddata} then follows from \cref{thm:harddataCQ} and \cref{thm:harddataU}.

\begin{figure}[p]
    \centering
    \begin{tikzpicture}[rotate=57.5,transform shape]
    \node (count) at (0, 1) {\#P-like};
    \node (modnp) at (3, 1) {$\textsc{Mod}_p\textsc{P}\cup\textsc{NP}$-like};
    \node (mod) at (6, 1) {$\textsc{Mod}_p\textsc{P}$-like};
    \node (sat) at (9, 1) {\textsc{NP}-like};
    \node (05a) at (0, 0) {$\mathbb{N}[(x_i)_{\infty}]$};
    \node (1b) at (3, 0) {$\mathbb{Z}_{p}\times\mathbb{N}_{\leq o}[(x_i)_{\infty}]$};
    \node (1c) at (6,-1) {$\mathbb{Z}_{p}[(x_i)_{\infty}]$};
    \node (1d) at (9,-1) {$\mathbb{N}_{\leq o}[(x_i)_{\infty}]$};
    \node (2d) at (9,-2) {$\mathbb{B}[(x_i)_{\infty}]$};
    \node (25a) at (0,-3) {$\mathbb{N}[(x_i)_{k}]$};
    \node (3b) at (3,-3) {$\mathbb{Z}_{p}\times\mathbb{N}_{\leq o}[(x_i)_{k}]$};
    \node (3c) at (6,-4) {$\mathbb{Z}_{p}[(x_i)_{k}]$};
    \node (3d) at (9,-4) {$\mathbb{B}[(x_i)_{k}]$};
    
    \node (4a) at (0,-5) {$\mathbb{Q}$};
    \node (45a) at (0,-6) {$\mathbb{Z}$};
    \node (5a) at (0,-7) {$\mathbb{N}$};
    
    \node (5b) at (3,-7) {$\mathbb{Z}_{p}\times\mathbb{N}_{\leq o}$};
    
    \node (5c) at (6,-8) {$\mathbb{Z}_{p}$};
    
    \node (4d1) at (8.5,-6) {$\mathcal{R}_{\max}$};
    \node (4d2) at (9.5,-6) {$\mathcal{R}_{\min}$};
    \node (5d) at (9,-8) {$\mathbb{B}$};
    
    \node (6d) at (6,-9) {$\mathbb{T}$};

    \draw[->] (05a) -- (1b);
    \draw[->] (1b) -- (1c);
    \draw[->] (1b) -- (1d);
    \draw[->] (05a) -- (25a);
    \draw[->] (25a) -- (4a);
    \draw[->] (1b) -- (3b);
    \draw[->] (25a) -- (3b);
    \draw[->] (1c) -- (3c);
    \draw[->] (1d) -- (2d);
    \draw[->] (2d) -- (3d);
    \draw[->] (3b) -- (3c);
    \draw[->] (3b) -- (3d);
    \draw[->] (4a) -- (45a);
    \draw[->] (45a) -- (5a);
    \draw[->] (3b) -- (5b);
    \draw[->] (5a) -- (5b);
    \draw[->] (3c) -- (5c);
    \draw[->] (5b) -- (5c);
    \draw[->] (3d) -- (4d1);
    \draw[->] (3d) -- (4d2);
    \draw[->] (4d1) -- (5d);
    \draw[->] (4d2) -- (5d);
    \draw[->] (5b) -- (5d);
    \draw[->] (5a) -- (6d);
    \draw[->] (5b) -- (6d);
    \draw[->] (5c) -- (6d);
    \draw[->] (5d) -- (6d);
    
    \draw [decoration={brace,amplitude=2mm},decorate,thick]
      (10,0) -- (10,-2.5)
      node [midway,right=4mm,align=left] {\small \textsc{FPSpace(poly)}-\\\small membership implies \\\small a collapse of \textsc{PH}};
    \draw[-, black, dotted] (-4,-2.5) -- (12,-2.5);
    
    \node (FPparrCP) at (-3,-3) {\small$\textsc{FP}^{\#P}_{\|}$-complete};
    \draw[-, black, densely dotted] (FPparrCP) -- (25a);
    
    \node (FPparrModNP) at (12,-3) {\small$\textsc{FP}^{\textsc{Mod}_p\textsc{P}\cup\textsc{NP}}_{\|}$-complete};
    \draw[-, black, densely dotted] (FPparrModNP) -- (3b);
    
    \node (FPparrMod) at (-3,-4) {\small$\textsc{FP}^{\textsc{Mod}_p\textsc{P}}_{\|}$-complete};
    \draw[-, black, densely dotted] (FPparrMod) -- (3c);
    
    \node (FPparrNP) at (12, -4) {\small$\textsc{FP}^{\textsc{NP}}_{\|}$-complete};
    \draw[-, black, densely dotted] (FPparrNP) -- (3d);
    
    \node (GapP) at (-3,-6) {\small\textsc{GapP}-complete};
    \draw[-, black, densely dotted] (GapP) -- (45a);
    
    \node (OptP) at (12, -6) {\small\textsc{OptP}-complete};
    \draw[-, black, densely dotted] (OptP) -- (4d1);
    
    \node (CP) at (-3,-7) {\small\#P-complete};
    \draw[-, black, densely dotted] (CP) -- (5a);
    
    \node (ModP) at (-3,-8) {\small$\textsc{Mod}_p\textsc{P}$-hard};
    \draw[-, black, densely dotted] (ModP) -- (5c);
    
    \node (NP) at (12, -8) {\small\textsc{NP}-complete};
    \draw[-, black, densely dotted] (NP) -- (5d);
    
    \node (trivial) at (-3, -9) {\small $\mathcal{O}(1)$};
    \draw[-, black, densely dotted] (trivial) -- (6d);

    \end{tikzpicture}
    \caption{Intuitive hardness relation between semirings. $\mathcal{R}_1$ harder than $\mathcal{R}_2$ indicated by arrows $\mathcal{R}_1 \rightarrow \mathcal{R}_2$. Relation of complexity classes $\mathcal{C}$ and semirings $\mathcal{R}$, indicated by dotted lines $\mathcal{C}$\protect\tikz[baseline]{\protect\draw[densely dotted] (0,0.5ex)--(0.3,0.5ex);}$\mathcal{R}$.} 
    \label{fig:subsumptions}
\end{figure}

\Cref{fig:subsumptions} shows the intuitive hardness relationship between semirings, as formalized by epimorphisms conforming to \cite{DBLP:journals/jair/EiterK23}. We see that, broadly speaking, there are four types of semirings associated with \#\textsc{P}, $\textsc{Mod}_p\textsc{P}$, \textsc{NP}, or a combination the last two. In a sense, the semiring of polynomials over countably many variables with natural number coefficients is structurally as hard as possible~\cite{DBLP:journals/jair/EiterK23}.

\end{document}